\documentclass[10pt,journal]{IEEEtran}
\IEEEoverridecommandlockouts

\usepackage{graphicx}
\usepackage{amsmath,amsthm,amsfonts,amssymb}
\usepackage{cite,hyperref}
\usepackage{bm}
\usepackage{bbm}
\usepackage{url}
\usepackage{array}
\usepackage{color,soul}
\usepackage{multirow}
\usepackage{booktabs}
\usepackage[table,xcdraw]{xcolor}
\usepackage{enumitem}
\usepackage{subcaption}

\usepackage[english]{babel}
\usepackage{algorithm}
\usepackage{algorithmic}
\usepackage{setspace}
\usepackage[english]{babel}

\theoremstyle{plain}
\newtheorem{thm}{Theorem}
\newtheorem{lem}[thm]{Lemma}

\newtheorem{cor}[thm]{Corollary}

\newtheorem{sty1}{Theorem}
\newtheorem{defi}[sty1]{Definition}

\allowdisplaybreaks[4]

\begin{document}
\title{Multi-Mode Pinching-Antenna Systems: Polarization-Aware Full-Wave Modeling and Optimization}

\author{
Dengke Wei,
Runxin Zhang,
Yulin Shao,
Fen Hou,
Shaodan Ma

\thanks{D. Wei, F. Hou, and S. Ma are with the State Key Laboratory of Internet of Things for Smart City, University of Macau, Macau, China (dengke.wei@connect.um.edu.mo, \{fenhou,~shaodanma\}@um.edu.mo).}
\thanks{R. Zhang is with the Department of Electronic Engineering, Tsinghua University, Beijing, China (rxz@mail.tsinghua.edu.cn)
}
\thanks{Y. Shao is with the Department of Electrical and Electronic Engineering, The University of Hong Kong, Hong Kong, China (ylshao@hku.hk).
}
\thanks{This work was performed during D. Wei's visit at the Department of Electrical and Computer Engineering, The University of Hong Kong.
}
}

\maketitle

\begin{abstract}
Millimeter-wave and terahertz communications face a fundamental challenge: overcoming severe path loss without sacrificing spectral efficiency. Pinching antenna systems (PASS) address this by bringing radiators physically close to users, yet existing frameworks treat the waveguide as a mere transmission line, overlooking its inherent multi-mode capabilities and the critical role of polarization. 
This paper develops the first polarization-aware, full-wave electromagnetic model for multi-mode PASS (MMPASS), capturing spatial radiation patterns, modal polarization states, and polarization matching efficiency from first principles. 
Leveraging this physically grounded model, we reveal fundamental trade-offs among waveguide attenuation, atmospheric absorption, and geometric spreading, yielding closed-form solutions for optimal PA placement and orientation in single-user scenarios. 
Extending to multi-user settings, we propose a modular optimization framework that integrates fractional programming with closed-form polarization updates, scaling gracefully to arbitrary numbers of waveguides, PAs, and users.
Numerical results show that MMPASS achieves up to a $167\%$ increase in spectral efficiency compared with single-mode PASS. Moreover, when comparing MMPASS with its polarization-ignorant counterpart, polarization awareness alone improves the sum rate by up to $23\%$.
By bridging rigorous electromagnetic theory with scalable optimization, MMPASS establishes a physically complete and practically viable foundation for future high-frequency wireless networks.
\end{abstract}

\begin{IEEEkeywords}
Pinching antenna systems, multi-mode pinching antennas, 6G, polarization, electromagnetic modeling.
\end{IEEEkeywords}

\section{Introduction}
The pursuit of ubiquitous connectivity, ultra-high data rates, and massive device interoperability in the 6G era \cite{dang2020should,saad2019vision,liu2025capa,shao2024theory} has fundamentally challenged conventional antenna architectures\cite{ding2025flexible,liu2025capa,zhang2024polarization,wu2021intelligent}. As we venture into millimeter-wave (mmWave) and terahertz (THz) bands, the abundant bandwidth comes at the cost of severe path loss, atmospheric absorption, and acute sensitivity to blockages\cite{chang2022integrated,gao2022integrated,xue2024survey}. These impairments demand novel paradigms that go beyond traditional multiple-input multiple-output (MIMO) arrays, hybrid beamforming, or even reconfigurable intelligent surfaces (RIS), all of which still rely on fixed radiating elements and suffer from fundamental distance-dependent fading\cite{wu2021intelligent,tang2020wireless,shao2021federated,dai2020reconfigurable}.

In this context, pinching antenna systems (PASS) \cite{suzuki2022pinching,ding2025flexible,xu2026generalized} have emerged as a transformative architecture that synergistically combines low-loss guided-wave propagation with the agility of mobile antennas. First prototyped by NTT DOCOMO \cite{suzuki2022pinching}, PASS deploys a dielectric waveguide as the signal backbone. Along this waveguide, multiple non-contact radiating elements, called pinching antennas (PAs), can be placed arbitrarily \cite{zhang2025directional,wang2025modeling,chen2025dynamic,11212813}. By physically positioning a PA in close proximity to a user, the system dramatically shortens the free-space propagation distance, effectively bypassing the severe path loss and blockage sensitivity that plague conventional wireless links. This unique capability has spurred a growing body of research, covering fundamental performance analysis \cite{ding2025flexible,zhang2025directional,qin2025joint,zhou2026joint,11368709}, array gain characterization \cite{wang2025modeling,11212813}, joint beamforming design \cite{wang2025modeling,11165763,chen2026hybrid}, and even learning-based optimization\cite{11358830,karagiannidis2025deep}. Collectively, these works have firmly established PASS as a promising candidate for future high-frequency wireless networks.

A key observation, however, is that PASS is fundamentally waveguide communication. 
A dielectric waveguide is not merely a low-loss cable; it is an electromagnetic structure that inherently supports multiple guided modes, each being a distinct field solution to Maxwell's equations under given boundary conditions\cite{yeh2008essence}. 
Each mode propagates with its own propagation constant and transverse field distribution, and modes are orthogonal to one another. 
This physical property naturally invites a powerful communication concept: \emph{mode-division multiplexing} (MDM)\cite{luo2014wdm}. 
By exciting multiple orthogonal modes on the same waveguide, one can transmit independent data streams simultaneously over a single physical waveguide, thereby multiplying the system's degrees of freedom and spectral efficiency without deploying additional waveguides\cite{vcivzmar2012exploiting}.

Despite this clear opportunity, the vast majority of existing PASS literature remains focused on single-mode operation, where only the fundamental mode is excited. One notable exception is the parallel and contemporaneous work by Xu et al. \cite{xu2026multimodepinchingantennasystems}, which independently proposes a multi-mode PASS framework, demonstrating that mode selectivity can be exploited for multi-user communications. Using coupled-mode theory (CMT) \cite{haus2002coupled}, they develop optimization algorithms, such as Newton-based search and particle swarm optimization, to jointly design PA locations and baseband beamforming, and show significant gains over single-mode PASS.

However, as a CMT-based black-box model, their approach characterizes power transfer via aggregate coupling coefficients but does not describe the spatial field distribution radiated from each PA \cite{xu2026multimodepinchingantennasystems}. Consequently, it cannot capture the directional radiation patterns, nor can it account for the polarization state of the radiated electric field. Polarization is a critical factor at mmWave and THz frequencies \cite{stutzman2018polarization}, where a mismatch can cause several decibels of additional loss. Moreover, the receiver in \cite{xu2026multimodepinchingantennasystems} is modeled as a scalar isotropic antenna, ignoring the fact that optimal receive polarization must align with the incident field. These simplifications, while tractable, leave a significant performance gap unexploited and limit the physical grounding of the system design.

In this paper, we overcome these limitations by developing the first polarization-aware, full-wave electromagnetic modeling framework for multi-mode PASS (MMPASS). 
Departing from CMT-based black-box approach, our model captures the spatial radiation pattern and, more importantly, the polarization state of each guided mode at the receiver side. This enables us to formulate the end-to-end channel with polarization matching efficiency incorporated via Jones vectors, a physical dimension absent in all prior PASS literature. Building upon this physically grounded model, we address the sum-rate maximization problem for multi-user multi-PASS systems. The main contributions of this paper are summarized as follows.
\begin{itemize}[leftmargin=0.5cm]
    \item We establish the first complete, first-principles electromagnetic model for multi-mode pinching-antenna systems. We derive the modal field distributions inside the rectangular dielectric waveguide, compute the radiated fields from each PA port using the equivalence principle, and characterize the polarization state of each mode. This model provides a transparent physical foundation that goes far beyond the black-box CMT description.
    \item We introduce polarization matching efficiency into the PASS channel, and derive closed-form expressions for the optimal PA orientation (pitch and roll angles) and the optimal user antenna polarization vector. We further reveal the fundamental trade-off among waveguide attenuation, atmospheric absorption, and geometric spreading, leading to an analytical solution for the optimal PA position in single-user scenarios.
    \item Leveraging the inherent spatial selectivity of the radiated modes, we propose a modular optimization framework comprising: (i) geometry-guided user grouping, (ii) Hungarian-based greedy PA-to-group assignment, and (iii) fractional programming (FP) with closed-form polarization updates for joint precoding and receive polarization optimization. This framework scales gracefully to multiple waveguides, PAs, and arbitrary numbers of users.
    \item Comprehensive numerical and simulation results quantify the substantial gains brought by the MMPASS framework: it achieves up to a $167\%$ increase in spectral efficiency compared with single-mode PASS. Moreover, when comparing MMPASS with its polarization-ignorant counterpart, polarization awareness alone improves the sum rate by up to $23\%$. In addition, we show that discrete polarization control recovers most of the continuous optimization gain, offering an attractive complexity-performance trade-off.
\end{itemize}

\section{System Model}\label{sec:II}
We consider a MMPASS, as illustrated in Fig.~\ref{fig:system_model}.
The system comprises $M$ dielectric waveguides, each hosting $N$ PAs. 
Every PA supports $Q$ independent radiating ports, one for each guided mode, enabling simultaneous multi-stream transmission over a single waveguide.
The dielectric waveguides and PAs share an identical rectangular cross section with dimensions $a \times b$. 
Furthermore, we introduce an internal attenuation coefficient $\alpha_{\text{W}}$ and an atmospheric absorption loss coefficient $\alpha_{\text{A}}$ to represent the propagation loss per unit length within the waveguide and in free space, respectively.

\begin{figure}
    \centering
    \includegraphics[width=0.8\linewidth]{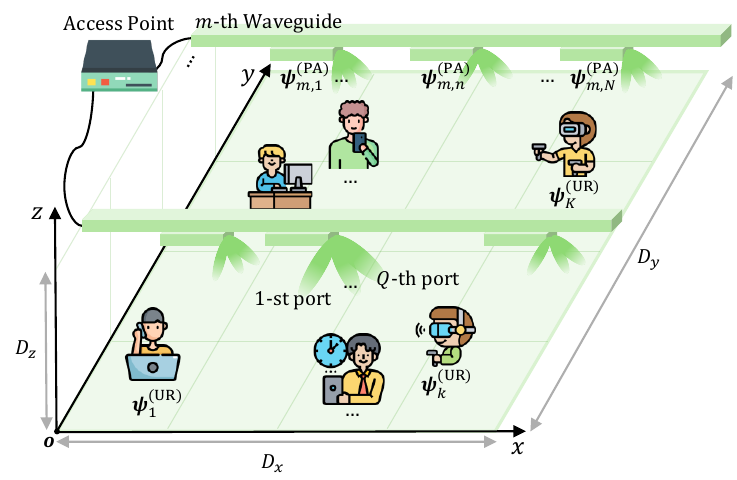}
    \caption{$M$ waveguides are deployed along the $x$-axis, each hosting $N$ PAs. Every PA is equipped with $Q$ independently orientable ports (one per guided mode), enabling simultaneous multi-stream transmission over a single waveguide.}
    \label{fig:system_model}
\end{figure}

\subsection{Spatial Configuration}
The service region is a 3D volume of size $D_x\times D_y\times D_z$, and $K$ single-antenna users are uniformly distributed on the plane $z=0$. A global Cartesian coordinate system (GCS) with basis vectors ${\bm i,\bm j,\bm k}$ is used throughout. Any point in the GCS is represented by the coordinate tuple $\bm{\psi} = (x, y, z)$.

\subsubsection{Waveguides (WG)}
The $M$ waveguides are placed horizontally at height $D_z$, oriented parallel to the $x$-axis and uniformly spaced along the $y$-axis. Each waveguide has a physical length of $D_x$.
The position of the $m$-th waveguide ($m=1,\ldots,M$) is specified by its feed point, denoted by $\bm{\psi}^{(\text{WG})}_m \triangleq \left(x^{(\text{WG})}_m, y^{(\text{WG})}_m, z^{(\text{WG})}_m\right)$.

\subsubsection{PAs}
On waveguide $m$, $N$ PAs are deployed. The radiation center of the $n$-th PA ($n=1,\dots,N$) is denoted by $\bm{\psi}^{ (\mathrm{PA}) }_{m,n} \triangleq \big(x^{(\text{PA})}_{m,n}, y^{(\text{PA})}_{m,n}, z^{(\text{PA})}_{m,n}\big) $, where $y^{(\text{PA})}_{m,n}=y^{(\text{WG})}_m$ and $z^{(\text{PA})}_{m,n}=D_z$.
To realize mode-division multiplexing, each PA is equipped with $Q$ independent ports, each port dedicated to one guided mode. Because the spatial separation among the ports of a PA is negligible compared to the distances to the users, all $Q$ ports share the same radiation center $\bm{\psi}^{ (\mathrm{PA}) }_{m,n}$. However, each port can be individually oriented to steer its radiation pattern. The orientation of the $q$-th port ($q=1,\dots,Q$) is defined by two angles: a pitch angle $\delta^{(\text{PA})}_{m,n,q}$ (rotation about the $y$-axis) and a roll angle $\xi^{(\text{PA})}_{m,n,q}$ (rotation about the $x$-axis). This directional flexibility is essential for both spatial selectivity and polarization control, as will be seen later.

\subsubsection{Users (UR)}
The $K$ users are located on the plane $z=0$; the position of the $k$-th user ($k=1,\dots,K$) is
$ \bm{\psi}^{(\text{UR})}_k \triangleq \big(x^{(\text{UR})}_k, y^{(\text{UR})}_k, 0\big)$.

To characterize the directional radiation of each port, we introduce two auxiliary coordinate systems: a local Cartesian coordinate system (LCS) attached to the port, and a local spherical coordinate system (LSCS) for describing directions.

\begin{defi}[LCS]
\label{defi:LCS_GCS}
For the $q$-th port of the $n$-th PA on the $m$-th waveguide, the LCS is defined with origin at $\bm{\psi}^{(\mathrm{PA})}_{m,n}$ and with axes $\bar{\bm{i}}$, $\bar{\bm{j}}$, and $\bar{\bm{k}}$ such that the port's pointing direction coincides with $-\bar{\bm k}$.
The transformation from GCS coordinates $\bm{\psi}$ to LCS coordinates is
\begin{equation}\label{coord_transform}
\bar{\bm{\psi}}_{m,n,q}
=
\bm{R}_y\!\left(\delta^{(\text{PA})}_{m,n,q}\right)^{\top}
\bm{R}_x\!\left(\xi^{(\text{PA})}_{m,n,q}\right)^{\top}
\big(\bm{\psi}-\bm{\psi}^{(\text{PA})}_{m,n}\big),
\end{equation}
where 
$$\bm{R}_x  (\theta) \!= \!\! \begin{bmatrix}
1\!\!\! & 0\!\!\!  & 0\\
0\!\!\! & \cos \theta\!\!\!  & -\sin\theta\\
0\!\!\! & \sin\theta\!\!\!  & \cos\theta
\end{bmatrix} 
\text{ and }
\bm{R}_y (\theta) \! = \!\!  \begin{bmatrix}
\cos\theta \!\!\! & 0 \!\!\! & \sin\theta\\
0 \!\!\! & 1 \!\!\! & 0\\
-\sin\theta \!\!\! & 0 \!\!\! & \cos\theta
\end{bmatrix}$$
are the rotation matrices about the $x$- and $y$-axes, respectively.
\end{defi}

\begin{defi}[LSCS]
\label{defi:local_spherical}
For a point with LCS coordinates $\bar{\bm{\psi}}_{m,n,q}$, the LSCS uses the radial distance $\bar{r}_{m,n,q} \triangleq \|\bar{\bm{\psi}}_{m,n,q}\|$, polar angle $\theta_{m,n,q} \triangleq \arccos(\bar{z}_{m,n,q}/\bar{r}_{m,n,q})$, and azimuth angle $\phi_{m,n,q} \triangleq \mathrm{atan2}(\bar{y}_{m,n,q}, \bar{x}_{m,n,q})$.\footnote{Here, $\mathrm{atan2}(\bar{y}, \bar{x})$ denotes the four-quadrant inverse tangent, returning the azimuth angle of the point in the $xy$-plane relative to the positive $x$-axis.} 
The transformation between the LCS and LSCS coordinate bases is given by
\begin{equation}
\begin{bmatrix}
\bar{\bm{\upsilon}}\\
\bar{\bm{\vartheta}}\\
\bar{\bm{\varphi}}
\end{bmatrix}
=
\begin{bmatrix}
\sin\theta\cos\phi & \sin\theta\sin\phi & \cos\theta\\
\cos\theta\cos\phi & \cos\theta\sin\phi & -\sin\theta\\
-\sin\phi & \cos\phi & 0
\end{bmatrix}
\begin{bmatrix}
\bar{\bm{i}}\\
\bar{\bm{j}}\\
\bar{\bm{k}}
\end{bmatrix}.
\end{equation}
\end{defi}

\begin{figure}
    \centering
    \includegraphics[width=0.6\linewidth]{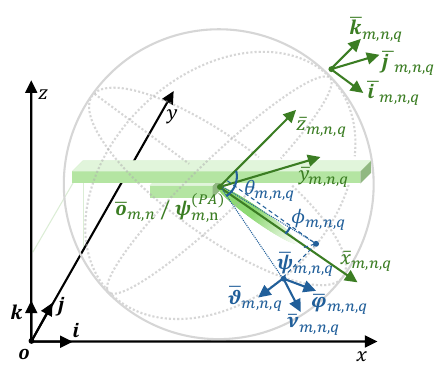}
    \caption{The relationships among GCS, LCS and LSCS, where $\mathbf{o}$, $\bar{\mathbf{o}} \,/\, \bm{\psi}_{m,n}^{(\mathrm{PA})}$, and $\bar{\bm{\psi}}$ are their respective origins.}
    \label{fig:coordinate_system}
\end{figure}

Fig.~\ref{fig:coordinate_system} illustrates the relationships among GCS, LCS and LSCS. With these definitions, the radiated electric field from any port can be expressed compactly in the LSCS, which is natural for far-field propagation and polarization description.

\subsection{Signal Model in MMPASS}
Let $\bm{s}_0 \in \mathbb{C}^{K \times 1}$ be the vector of independent data symbols intended for the $K$ users, with $\mathbb{E}\{\bm{s}_0 \bm{s}_0^{\mathrm{H}}\} = \bm{I}_K$.
A digital precoder $\bm{W} \in \mathbb{C}^{QM \times K}$ maps the symbols to the $QM$ mode-domain inputs:
\begin{equation}\label{e:s}
\bm{s} = \bm{W}\bm{s}_0,\quad \operatorname{tr}(\bm{W}\bm{W}^{\mathsf H})\le 1.
\end{equation}
The entries of $\bm{s}$ are arranged as $s_{m,q}$, the excitation of the $q$-th guided mode on the $m$-th waveguide. The $QM$ mode inputs can be launched via multi-mode couplers (e.g., orthomode transducers) that preserve modal orthogonality.

The received signal at all users, after propagation through the waveguide, radiation into free space, and polarization-dependent reception, can be written as
\begin{equation} \label{eq:end_to_end}
\bm{\zeta}
=
\sqrt{P}\bm{H}\bm{s} +\bm{\nu} 
\triangleq \sqrt{P}  \left(\bm{\Lambda}\odot \bm{H}^{(\mathrm{P}\to \mathrm{U})}\right) \bm{H} ^ {(\mathrm{W}\to \mathrm{P})} \bm{s}
+\bm{\nu},
\end{equation}
where $P$ is the total transmit power, $\bm{\nu}\sim\mathcal{CN}(0,\sigma^2\bm{I}_K)$ is the additive white Gaussian noise (AWGN), $\bm{H}\in\mathbb{C}^{K\times QM}$ is the effective end-to-end channel matrix, and $\odot$ denotes the Hadamard product.

To expose the unique propagation characteristics of MMPASS, \eqref{eq:end_to_end} decomposes $\bm H$ into three components.
\begin{itemize}[leftmargin = 0.4cm]
    \item \textit{Waveguide attenuation:} $\bm{H}^{(\mathrm{W}\to \mathrm{P})} \in \mathbb{C}^{QMN \times QM}$ captures guided-wave propagation from the $QM$ mode inputs to the $QMN$ radiating ports (one per port per PA). Unlike prior studies that often omit waveguide losses \cite{11358830,11371592}, we explicitly incorporate the exponential attenuation. 
    \item \textit{Electromagnetic radiation:} $\bm{H}^{(\mathrm{P}\to \mathrm{U})} \in \mathbb{C}^{K \times QMN}$ models the free-space radiation from each port to each user, including the directional pattern and phase progression. This is in contrast to prevailing prior works that use a simplified omnidirectional radiation model \cite{wang2025modeling,11223640}.
    \item \textit{Polarization matching:} $\bm{\Lambda} \in \mathbb{R}^{K \times QMN}$ represents polarization matching between the incident electric field at each user and the user's receive antenna.
\end{itemize}

\section{Full-Wave Modeling of the MMPASS Channel}\label{sec:III}
In this section, we present rigorous electromagnetic formulations of $\mathbf{H}^{(\mathrm{W}\to \mathrm{P})}$, $\mathbf{H}^{(\mathrm{P}\to \mathrm{U})}$, and $\bm{\Lambda}$ in \eqref{eq:end_to_end}.
Unlike existing black-box CMT approaches that characterize only power transfer, our model derives the actual electric field distributions inside the multi-mode waveguide, the radiated fields from each PA port, and the polarization-dependent reception at the user side. This full-wave formulation provides the necessary physical foundation for polarization-aware optimization and reveals the spatial selectivity inherent in multi-mode PASS.

We focus on transverse electric (TE) modes, which offer well-defined polarization states and controllable radiation characteristics. In a TE mode, the electric field is strictly perpendicular to the propagation direction. Let $\text{TE}_{u,v}$ denote the supported TE modes, where $u$ and $v$ indicate the field variations along the waveguide's transverse $y$- and $z$-dimensions, respectively. All admissible $(u,v)$ pairs are collected into a set of $Q$ modes and indexed by a single mode index $q$, establishing a one-to-one correspondence between $(u,v)$ and $q$. Fig.~\ref{fig:TE_mode} illustrates the $\text{TE}_{1,0}$ mode, which serves as the dominant mode in rectangular dielectric waveguides.

\begin{figure}
    \centering
    \includegraphics[width=0.8\linewidth]{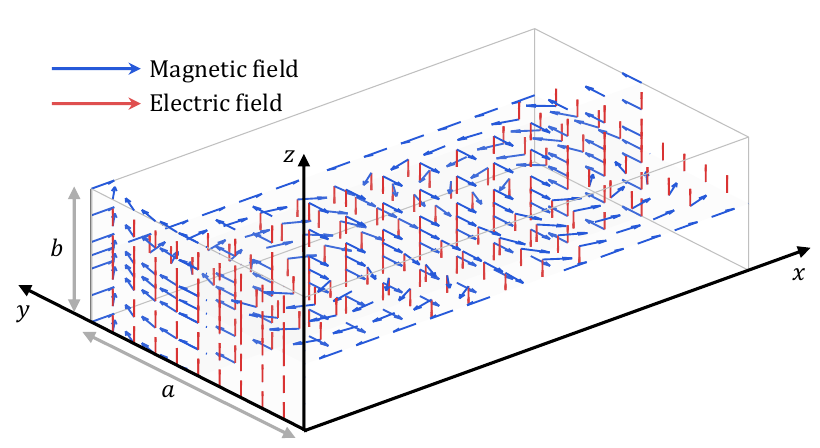}
    \caption{Electric field distribution of the $\text{TE}_{1,0}$ mode in a rectangular waveguide.}
    \label{fig:TE_mode}
\end{figure}

\subsection{Waveguide Propagation}
Unlike existing PASS literature that treats the waveguide as a lossless transmission line with a single scalar coefficient, we explicitly model the spatial field distribution of each guided mode. 

For the $m$-th waveguide, the transverse electric field of the $q$-th TE mode at an arbitrary point $\bm{\psi}=(x,y,z)$ inside the waveguide is given by~\cite{pozar2021microwave}
\begin{eqnarray} \label{eq:Et_mode1}
&&\hspace{-0.8cm} \bm{E}_{m,q}^{(\text{WG})}(\bm{\psi})  =  \frac{j \omega \mu \pi s_{m,q}}{\varrho_{q}^2 }  \sqrt{e^{ - \alpha_{\text {W}} x}}  e^{-j \beta_q x } \notag \\
 &&\hspace{-0.8cm} \cdot \! \Bigg[ \!  \frac{v}{b} \! \cos \! \left( \!  \frac{u\pi}{a} \! \left( \!\! \left(y \!- \! y^{(\text{WG})}_m\right) \! + \! \frac{a}{2} \! \right) \!\! \right) \! \sin \! \left( \!  \frac{v\pi}{b} \! \left( \!\!  \left( \! z \! - \! z^{(\text{WG})}_m \! \right) \! + \! \frac{b}{2} \!  \right) \!\! \right) \! {\bm{j}} \notag\\
 &&\hspace{-0.8cm} +  \frac{u}{a} \!\sin \! \left( \! \frac{u\pi}{a} \! \left( \!\! \left( \! y \! - \!  y^{(\text{WG})}_m \! \right)\!+ \!\frac{a}{2} \! \right) \!\! \right) \! \cos \! \left( \! \frac{v\pi}{b} \! \left( \!\! \left( \! z \! - \! z^{(\text{WG})}_m \! \right) \! +\frac{b}{2}\! \right) \!\! \right) \! {\bm{k}} \! \Bigg] \! \notag \\
 &&\hspace{-0.8cm} \triangleq \bm{\mathcal{E}}_{m,q}^{(\text{WG})}(\bm{\psi}) \sqrt{e^{ - \alpha_{\text{W}} x}}  e^{-j \beta_q x },
\end{eqnarray}
where $\omega = 2\pi f$ is the angular frequency,
$\mu$ is the permeability of the dielectric waveguide,
$\varrho_q = \sqrt{(u\pi/a)^2 + (v\pi/b)^2}$ is the cutoff wavenumber,
$\varrho = \frac{2\pi}{\lambda}$ is the wavenumber,
$\beta_q = \sqrt{\varrho^2 - \varrho_q^2}$ is the propagation constant,
and $\lambda$ is the wavelength.

The factor $\sqrt{e^{-\alpha_{\text{W}} x}}$ captures the internal waveguide attenuation, while $e^{-j \beta_q x}$ accounts for the phase progression. The spatial pattern $\bm{\mathcal{E}}_{m,q}^{(\text{WG})}(\bm{\psi})$ contains the transverse field distribution of mode $q$.

Owing to mode orthogonality, the total field inside the $m$-th waveguide is the superposition of all $Q$ modes: $\bm{E}_m^{(\text{WG)}}(\bm{\psi})
= \sum_{q=1}^Q {\bm{E}_{m,q}^{(\text{WG})}(\bm{\psi}) }$.

For each PA, the physical interval along the waveguide over which it extracts guided energy is referred to as the coupling length \cite{wang2025modeling}. Specifically, for the $n$-th PA on the $m$-th waveguide, this length is denoted by $\tau_{m,n}$, and the fraction of power coupled out from the waveguide is given by $\sin^2(\kappa \tau_{m,n})$, where $\kappa$ is the coupling coefficient determined by the waveguide material properties. The mode-separated electric field at the radiation point of the PA can then be expressed as
\begin{eqnarray} \label{eq:E_PA^t}
 \bm{E}_{m,n,q}^{(\text{PA})} &\hspace{-0.4cm} = &\hspace{-0.4cm} \left( \prod_{i=1}^{n-1}\!\sqrt{1\!-\!\sin^{\!2}\!\left(\kappa\tau_{m,i}\right)}\!\right)  \! \sin \! \left(\kappa\tau_{m,n}\right) \!  \bm{E}_{m,q}^{(\text{WG)}} \! \left(\bm{\psi}^{(\text{PA})}_{m,n} \right) \notag \\
 &\hspace{-0.4cm} \overset{(a)}{=} &\hspace{-0.3cm} \sqrt{\frac{1}{N}}\cdot\bm{E}_{m,q}^{(\text{WG})}\left(\bm{\psi}^{(\text{PA})}_{m,n} \right).
\end{eqnarray}
The first term represents the residual field after coupling by PAs $1, \dots, n-1$, while the second term corresponds to the fraction coupled out by the $n$-th PA. Step (a) follows from employing an equal-quota power allocation strategy in~\cite{zhang2025directional}, under which the coupling length is set as $\tau_{m,n} = \frac{1}{\kappa}\arcsin\sqrt{\frac{1}{N+1-n}}$.

Combining \eqref{eq:Et_mode1} and \eqref{eq:E_PA^t}, the channel gain from the waveguide input to the PA radiation point follows directly.
\begin{thm}[Waveguide-to-PA Channel Gain in MMPASS]\label{Thm:h_WP}
    The channel coefficient for mode $q$ from the input of the $m$-th waveguide to the radiation point of the $n$-th PA on the same waveguide is given by
    \begin{equation} \label{eq:h^WP}
    \hspace{-0.3cm}h^{(\text{W }\to\text{ P})}_{m,n,q}\!  = \! \frac{\bm{E}_{m,n,q}^{(\text{PA})}}{\bm{E}_{m,q}^{(\text{WG})} \!\! \left(\bm{\psi}^ {(\text{WG})}_m  \right)}
    \! = \! \sqrt{\frac{e^{ - \alpha_{\text{W}} x^{(\text{PA})}_{m,n}}}{N}} e^{-  j\beta_q x^{(\text{PA})}_{m,n}}\!.\!\!
    \end{equation}
\end{thm}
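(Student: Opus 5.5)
The proof is a direct substitution of \eqref{eq:E_PA^t} into \eqref{eq:Et_mode1}, followed by a cancellation of the common modal factor. First I would establish step (a) of \eqref{eq:E_PA^t} rigorously, since the theorem relies on it. With $\tau_{m,i}=\frac{1}{\kappa}\arcsin\sqrt{\frac{1}{N+1-i}}$, we get $\sin^2(\kappa\tau_{m,i})=\frac{1}{N+1-i}$ and hence $1-\sin^2(\kappa\tau_{m,i})=\frac{N-i}{N+1-i}$. The product over $i=1,\dots,n-1$ then telescopes:
\begin{equation*}
\prod_{i=1}^{n-1}\frac{N-i}{N+1-i}=\frac{N-n+1}{N}.
\end{equation*}
Taking square roots and multiplying by $\sin(\kappa\tau_{m,n})=\sqrt{\frac{1}{N+1-n}}$ gives exactly $\sqrt{1/N}$, which confirms (a) for every $n$, including the empty-product case $n=1$.

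Next I would write both the numerator and the denominator using the factorization at the end of \eqref{eq:Et_mode1}. The numerator becomes
\begin{equation*}
\sqrt{1/N}\,\bm{\mathcal{E}}_{m,q}^{(\text{WG})}\big(\bm{\psi}^{(\text{PA})}_{m,n}\big)\sqrt{e^{-\alpha_{\text{W}}x^{(\text{PA})}_{m,n}}}\,e^{-j\beta_q x^{(\text{PA})}_{m,n}}.
\end{equation*}
The denominator is the same expression evaluated at the feed point $\bm{\psi}^{(\text{WG})}_m$. Here I would adopt the convention, implicit in the statement, that the $x$-coordinate is measured from the feed, so that $x^{(\text{WG})}_m=0$. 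The attenuation and phase factors in the denominator then reduce to $1$, and the denominator is just $\bm{\mathcal{E}}_{m,q}^{(\text{WG})}(\bm{\psi}^{(\text{WG})}_m)$.

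The key observation, and the only real subtlety, is that the ratio of two vector fields is meaningful here. The transverse pattern $\bm{\mathcal{E}}_{m,q}^{(\text{WG})}$ depends only on $(y-y^{(\text{WG})}_m,\ z-z^{(\text{WG})}_m)$ and not on $x$. The PA radiation center shares the waveguide's transverse coordinates: $y^{(\text{PA})}_{m,n}=y^{(\text{WG})}_m$ and $z^{(\text{PA})}_{m,n}=D_z=z^{(\text{WG})}_m$. Therefore $\bm{\mathcal{E}}_{m,q}^{(\text{WG})}(\bm{\psi}^{(\text{PA})}_{m,n})=\bm{\mathcal{E}}_{m,q}^{(\text{WG})}(\bm{\psi}^{(\text{WG})}_m)$ as vectors, and the ratio is the well-defined scalar by which one vector is proportional to the other.

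I expect the main obstacle to be a degenerate case of this cancellation. At the waveguide center, the sine/cosine arguments are $u\pi/2$ and $v\pi/2$, so some modes (e.g.\ $v$ odd with $u$ odd) can have a zero field vector there. I would handle this in one of two ways. One option is to interpret the ratio componentwise on the common nonzero direction. The cleaner option, which I would use, is to define the coefficient as the scalar $c$ satisfying $\bm{E}^{(\text{PA})}_{m,n,q}=c\,\bm{E}^{(\text{WG})}_{m,q}(\bm{\psi}^{(\text{WG})}_m)$ at every transverse point of the cross-section. The $x$-independent factor $\bm{\mathcal{E}}$ is identical on both sides, so this $c$ exists regardless of the degenerate point. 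Cancelling that factor leaves $c=\sqrt{e^{-\alpha_{\text{W}}x^{(\text{PA})}_{m,n}}/N}\,e^{-j\beta_q x^{(\text{PA})}_{m,n}}$, which is \eqref{eq:h^WP}.
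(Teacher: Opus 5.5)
Your proposal is correct and follows the paper's own argument: substitute \eqref{eq:E_PA^t} into \eqref{eq:Et_mode1} and cancel the common factor $\bm{\mathcal{E}}_{m,q}^{(\text{WG})}$ together with $s_{m,q}$, which the paper simply says ``follows directly.'' The extra pieces you supply are details the paper leaves unstated: the telescoping check of step (a), the feed convention $x^{(\text{WG})}_m=0$, and the treatment of modes whose field vanishes at the waveguide center (for the $\text{TE}_{1,0}$ and $\text{TE}_{0,1}$ modes actually used, the center field is nonzero, so the pointwise ratio is already well defined).
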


From Theorem \ref{Thm:h_WP}, the channel matrix $\bm{H}^{(\text{W}\to\text{ P})}$ is block diagonal with $M$ blocks of size $NQ \times Q$, since each PA couples energy exclusively from its hosting waveguide. Each block further comprises $N$ diagonal submatrices of size $Q \times Q$, reflecting the orthogonality among the supported modes.

\subsection{PA Radiation}

We now derive the electric field radiated from the PA to an arbitrary observation point. The derivation follows electromagnetic radiation theory and proceeds as follows:
\begin{itemize}[leftmargin= 0.4cm]
    \item First, given the aperture field distribution in \eqref{eq:E_PA^t}, the equivalent surface electric and magnetic current densities, $\bm{J}_s$ and $\bm{M}_s$, can be obtained via the Huygens-Love equivalence principle \cite{schwinger2019classical}.
    \item The radiated fields are then determined using the free-space Green's function \cite{schwinger2019classical}. The magnetic vector potential $\bm{A}$ due to $\bm{J}_s$ and, by electromagnetic duality, the electric vector potential $\bm{F}$ due to $\bm{M}_s$ can then be expressed in integral form over the aperture.
    \item Finally, the electric field components are obtained from the corresponding differential relations applied to $\bm{A}$ and $\bm{F}$. Denoting the resulting fields by $\bm{E}_A$ and $\bm{E}_F$, respectively, the total field at any observation point follows from linear superposition as $\bm{E} = \bm{E}_A + \bm{E}_F$.
\end{itemize}

Under the far-field condition, i.e.$\|\bm{\psi}\|\gg\frac{D_s^{2}}{\lambda}$, where $D_s$ denotes the dimension of the radiation plane, the radiated field expression can be further simplified \cite{stutzman2012antenna}. Thus, the electric field induced at an arbitrary observation point $\bm{\psi}$ by the $q$-th port of the $n$-th PA on the $m$-th waveguide can be characterized in a unified form.

\begin{thm}[PA Radiation Model]\label{thm:E_radiation}
Consider the $q$-th port of the $n$-th PA on the $m$-th waveguide. Let $\bar{\bm{\psi}}_{m,n,q}$ be the relative position vector of an observation point $\bm{\psi}$ in the port's LCS, with local polar angle $\theta_{m,n,q}$ and azimuth $\phi_{m,n,q}$. The electric field radiated by mode $q$ at $\bm{\psi}$ is 
\begin{eqnarray}\label{eq:E_field_rigorous}
    &&\hspace{-1cm}\bm{E}_{m,n,q}(\bm{\psi}) =  \frac{j \varrho a b \omega \mu s_{m,q}  }{2\varrho_{q}^2 \pi \sqrt{N} \|\bar{\bm{\psi}}_{m,n,q}\| }   e^{\!  - \frac{1}{2}  \left(  \alpha_{\text{W}} x^{(\text{PA})}_{m,n} +  \alpha_{\text{A}} \|\bar{\bm{\psi}}_{m,n,q}\|   \right)  }\\
    &&\hspace{1cm}    e^{-j \left(\beta_q x^{(\text{PA})}_{m,n}+\varrho\|\bar{\bm{\psi}}_{m,n,q}\|\right)} \mathcal{S}_q (\theta_{m,n,q}, \phi_{m,n,q})  \bm{\Psi}_q ,\notag
\end{eqnarray}
where $\mathcal{S}_q$ denotes the pattern factor, which characterizes the spatial distribution and diffraction behavior of the rectangular PA aperture;
$\bm{\Psi}_q$ is the polarization vector, capturing the transverse polarization state of the radiated electric field within the LSCS.
\end{thm}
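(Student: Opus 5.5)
The approach is the classical aperture-radiation derivation, carried out in the port's LCS and then re-expressed through Definitions~\ref{defi:LCS_GCS} and~\ref{defi:local_spherical}.

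\emph{Step 1: aperture field.} Take the aperture field on the port face to be the mode-$q$ field from \eqref{eq:E_PA^t}. Write it as
$$\sqrt{1/N}\,\bm{\mathcal{E}}^{(\text{WG})}_{m,q}\,e^{-\alpha_{\text{W}}x^{(\text{PA})}_{m,n}/2}e^{-j\beta_q x^{(\text{PA})}_{m,n}},$$
which is exactly Theorem~\ref{Thm:h_WP} applied to the input field. The aperture is the $a\times b$ rectangle $\bar z=0$ with outward normal $-\bar{\bm k}$. Since the port is rigidly rotated by $\bm R_x(\xi)\bm R_y(\delta)$, the transverse components $\bm j,\bm k$ of \eqref{eq:Et_mode1} become fixed local transverse axes. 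Relabel them as $\bar{\bm i},\bar{\bm j}$ with local aperture coordinates $(\bar x',\bar y')\in[-a/2,a/2]\times[-b/2,b/2]$.

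\emph{Step 2: equivalent currents and far-field potentials.} By Huygens--Love, set $\bm M_s=-\hat{\bm n}\times\bm E_a$ and $\bm J_s=\hat{\bm n}\times\bm H_a$. Here $\bm H_a$ is obtained from the TE modal impedance $Z_q=\omega\mu/\beta_q$; alternatively, one can use the standard approximation $\bm H_a\approx\hat{\bm n}\times\bm E_a/\eta$. Under the far-field condition $\|\bar{\bm\psi}\|\gg D_s^2/\lambda$, the Green's function becomes
$$e^{-j\varrho\bar r}e^{j\varrho\,\hat{\bm\upsilon}\cdot\bm r'}/(4\pi\bar r).$$
The potentials $\bm A,\bm F$ then reduce to $1/\bar r$ times a 2D Fourier transform of the aperture distribution, evaluated at $(\varrho\sin\theta\cos\phi,\varrho\sin\theta\sin\phi)$. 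Free-space propagation over $\bar r$ contributes the extra factor $e^{-\alpha_{\text{A}}\bar r/2}$, which is multiplied in as a phenomenological amplitude loss.

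\emph{Step 3: fields and factorization.} Use the far-field relations $\bm E_A=-j\omega(A_\vartheta\bar{\bm\vartheta}+A_\varphi\bar{\bm\varphi})$ and $\bm E_F=-j\varrho(F_\varphi\bar{\bm\vartheta}-F_\vartheta\bar{\bm\varphi})$, and superpose them.

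The separable integrals $\int\cos(u\pi(\bar x'/a+1/2))e^{jk_x\bar x'}d\bar x'$ and their sine counterparts give closed-form sinc-type terms in $\theta,\phi$. Evaluating them produces the prefactor $ab/(2\pi)\cdot\pi/\varrho_q^2\cdot\varrho$ multiplying $j\omega\mu s_{m,q}$. Combining with the constants from Step~1 gives the stated amplitude $j\varrho ab\omega\mu s_{m,q}/(2\varrho_q^2\pi\sqrt N\bar r)$.

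Finally, collect all $(\theta,\phi)$ dependence. The scalar magnitude goes into $\mathcal S_q(\theta,\phi)$, and the remaining normalized transverse $(\bar{\bm\vartheta},\bar{\bm\varphi})$ direction goes into $\bm\Psi_q$. The phase terms $e^{-j\beta_qx^{(\text{PA})}}$ and $e^{-j\varrho\bar r}$ combine into the stated exponential.

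\emph{Main obstacle.} The hard step is the factorization into a scalar pattern $\mathcal S_q$ times a direction-independent vector $\bm\Psi_q$. In general, the $\bar{\bm\vartheta}$ and $\bar{\bm\varphi}$ components carry different angular functions, because the $\bm j$ and $\bm k$ aperture components have different Fourier transforms.

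I would first handle TE$_{u,0}$ and TE$_{0,v}$, where only one aperture component survives. There the two components share a common transform factor, leaving only $\cos\phi$ and $\sin\phi$ type projections. For general $(u,v)$, I would simply define $\bm\Psi_q$ as the normalized polarization vector and $\mathcal S_q$ as the norm, both of which are then functions of $(\theta,\phi)$. This means the theorem is to be read as a notational factorization rather than a strict separation, and I would state that interpretation explicitly.
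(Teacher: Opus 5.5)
Your route is the one the paper takes in Appendix~A. Both use Love's equivalence with $\bm M_s=-\hat{\bm n}\times\bm E$ and $\bm J_s=\hat{\bm n}\times\bm{\mathcal H}$, with $\bm{\mathcal H}$ taken from the TE modal impedance $\omega\mu/\beta_q$. Both then apply the Fraunhofer reduction of $\bm A,\bm F$ and evaluate closed-form separable integrals for TE$_{1,0}$/TE$_{0,1}$. Both insert $e^{-\alpha_{\text{A}}\bar r/2}$ by hand: the paper's step (c) does not derive it either, and it drops the $e^{-j\beta_q x^{(\text{PA})}_{m,n}}$ phase that you correctly carry from Step~1. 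Keep the modal-impedance choice. It is what produces the $\beta_q/\varrho$ terms in the paper's $\bm\Psi_q$, whereas the shortcut $\bm H_a\approx\hat{\bm n}\times\bm E_a/\eta$ collapses them to Huygens factors $1+\cos\theta$.

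Your ``main obstacle'' comes from a misreading. The statement never asks for a direction-independent $\bm\Psi_q$, and the paper's is not one: it contains $\cos\theta$, $\cos\phi$, $\sin\phi$. The paper splits the field as follows: $\mathcal S_q$ is the real, signed, separable aperture transform, and $\bm\Psi_q$ is an obliquity-times-projection vector of non-unit norm. Your norm/unit-vector split is admissible for this theorem and equivalent downstream: Theorem~\ref{prop:H_PU} uses only $\mathcal S_q\|\bm\Psi_q\|$, and the Jones vector uses only the direction. It will not, however, reproduce the two-mode Corollary's formulas. You are right that for $uv\neq 0$ the two aperture components have different transforms, so the factorization is only notational. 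The paper passes over this with ``after some manipulations'' and evaluates only $q=1,2$.

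The step that does not go through is the constant. As written, $\frac{ab}{2\pi}\cdot\frac{\pi}{\varrho_q^2}\cdot\varrho=\frac{ab\varrho}{2\varrho_q^2}$ misses the target by a factor $\pi$, and Step~1 supplies only $1/\sqrt N$ and exponentials. Even with the evident fix $\frac{2ab}{\pi}\cdot\frac{\varrho}{4\pi}$, the stated constant appears only if you drop the factor $u/a$ (resp.\ $v/b$) in \eqref{eq:Et_mode1} and one factor $j$. Keeping them for TE$_{1,0}$ uses the following ingredients:
aperture amplitude $j\omega\mu\pi s_{m,q}/(\varrho_1^2 a\sqrt N)$;
half-cosine integral $\frac{2a}{\pi}\frac{\cos(k_y a/2)}{1-(k_y a/\pi)^2}$;
uniform integral $b\,\mathrm{sinc}$;
far-field factor $j\varrho/(4\pi\bar r)$.
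Together these give
\[
\Bigl|\frac{j\omega\mu\pi s_{m,q}}{\varrho_1^2 a\sqrt N}\cdot\frac{2ab}{\pi}\cdot\frac{j\varrho}{4\pi\bar r}\Bigr|
=\frac{\varrho\, b\,\omega\mu|s_{m,q}|}{2\pi\varrho_1^2\sqrt N\,\bar r}.
\]
This is the stated prefactor divided by $a$, with a phase differing by $j$; the stated one carries a spurious length dimension. So with the Corollary's normalization ($\mathcal S_1=1$ at broadside), an honest evaluation does not yield \eqref{eq:E_field_rigorous}. The paper's step (c) asserts this constant without showing the integrals, so the weakness is shared. Still, you should either report the corrected amplitude or say explicitly that the residual factor is absorbed into $\mathcal S_q$, rather than claim the bookkeeping closes.

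A minor point on conventions: with outward normal $-\bar{\bm k}$ and your labeling ($a$ along $\bar x$), you will get the Corollary's $\bm\Psi_q$ with $\cos\theta\to-\cos\theta$ and rotated by $90^\circ$ in $\phi$ (up to sign). This is harmless for this statement, but fix the conventions if you want to match the Corollary.
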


\begin{proof}
(sketch) The far-field radiation at an arbitrary observation point is evaluated by superposing the equivalent electric and magnetic surface currents on the aperture. Driven by the tangential aperture fields, these induced current densities are expressed as follow. For notational simplicity, we omit the subscripts ${m,n,q}$ whenever no ambiguity arises.
\begin{equation}
  \bm{J}_{s} =\frac{1}{\sqrt{N}} \bm{n} \times \bm{\mathcal{H}}^{(\text{WG})}(\bm p),  \label{eq:J_s}
\end{equation}
\begin{equation}
    \bm{M}_s = -\frac{1}{\sqrt{N}}  \bm{n}\times \bm{E}^{(\text{WG})}(\bm p) ,\label{eq:M_s}
\end{equation}
where  $\bm{n}$ denotes the rectangular aperture $S$ of the $q$-th port outward unit normal in the LCS, and $\bm p \in \mathcal S$ denotes the position vector of an arbitrary point on the aperture plane.
$\bm{\mathcal{H}}^{(\text{WG})}(\bm p) \!\!=\!\! \frac{\beta}{\omega\mu} \big({\bm{n}} \!\times\!  \bm{E}^{(\text{WG})}(\bm p) \big)$ denotes the tangential magnetic field associated with the tangential electric field $ \bm{E}^{(\text{WG})}(\bm p)$.

In free space, the equivalent electric and magnetic surface currents generate the electric and magnetic vector potentials
\begin{equation*}
\bm{A}(\bm \psi)=\mu\,\frac{e^{-j\varrho\| \bar{\bm \psi} \|}}{4\pi \| \bar{\bm \psi} \|}
\iint_{S} \bm{J}_s(\bm p)\,e^{j\varrho\,\frac{\bar{\bm \psi}}{\| \bar{\bm \psi} \|}\cdot \bm p}\,\text{d}S, \label{e:A} \\
\end{equation*}
\begin{equation*}
    \bm{F}(\bm \psi)=\varepsilon\,\frac{e^{-j\varrho\| \bar{\bm \psi} \|}}{4\pi \| \bar{\bm \psi} \|}
\iint_{S} \bm{M}_s(\bm p)\,e^{j\varrho\,\frac{\bar{\bm \psi}}{\| \bar{\bm \psi} \|}\cdot \bm p}\,\text{d}S,\label{e:F}
\end{equation*}
where $\varepsilon$ denotes the permittivity of the surrounding propagation medium. Under the Lorenz gauge condition and Helmholtz equations, the electric field can be written in terms of the vector potentials as
\begin{eqnarray} \label{eq:Epsi}
     &&\hspace{-0.8cm}\bm{E}({\bm{\psi}}) =  \frac{1}{j\omega\varepsilon}\nabla\times\nabla\times\bm A(\bm{\psi}) - \frac{1}{\varepsilon}\nabla\times\bm F(\bm{\psi}) \\
    &&\hspace{-1cm}= -j\omega \mu  \frac{\bar{\bm \psi}}{\| \bar{\bm \psi} \|} \times  \left( \frac{\bar{\bm \psi}}{\| \bar{\bm \psi} \|} \times \bm{A}(\bm \psi) +   \sqrt{\frac{\mu}{\varepsilon}}  \bm{F}(\bm \psi) \right).\notag
\end{eqnarray}
Then, \eqref{eq:E_field_rigorous} can be obtained after some manipulations. 
A detailed derivation is provided in Appendix \ref{sec:AppA}.
\end{proof}

For the practically important case of two orthogonal modes $\text{TE}_{1,0}$ and $\text{TE}_{0,1}$, the pattern factors and polarization vectors admit closed forms.

\begin{cor}
    Omitting the waveguide index $m$ and the PA index $n$ for brevity, the spatial pattern factors $\mathcal{S}_q$ for a two-mode configuration excited by the orthogonal $\text{TE}_{1,0}$ and $\text{TE}_{0,1}$ modes ($q = 1, 2$) are given by
    \begin{equation}
    \hspace{-0.1cm}\begin{cases}
    \mathcal{S}_1(\theta_{1},\phi_{1})
    =
    \frac{\sin(\frac{b\pi}{\lambda}\sin\theta_{1}\cos\phi_{1})}{\frac{b\pi}{\lambda}\sin\theta_{1}\cos\phi_{1}}  \cdot 
    \frac{\cos(\frac{a\pi}{\lambda}\sin\theta_{1}\sin\phi_{1})}{1-(\frac{2a}{\lambda}{\sin\theta_{1}\sin\phi_{1}})^2}, \\
    \mathcal{S}_2(\theta_{2},\phi_{2})
    =
    \frac{\sin(\frac{a\pi}{\lambda}\sin\theta_{2}\sin\phi_{2})}{\frac{a\pi}{\lambda}\sin\theta_{2}\sin\phi_{2}}  \cdot 
    \frac{\cos(\frac{b\pi}{\lambda}\sin\theta_{2}\cos\phi_{2})}{1-(\frac{2b}{\lambda}\sin\theta_{2}\cos\phi_{2})^2}.
    \end{cases} \!\!\!
    \end{equation}
While the associated two-mode transverse polarization is
\begin{equation}\label{eq:obliquity_vector}
\hspace{-0.2cm}\begin{cases}
\! \bm{\Psi}_1 \! =\! \left(\!1 \!+ \!\frac{\beta_1}{\varrho}\cos\theta_{1}\!\right)
\!\cos\!\phi_{1}\bar{\bm{\vartheta}}_1 \!+\! \left(\frac{\beta_1}{\varrho} + \cos\theta_{1}\right)
\sin\!\phi_{1}\bar{\bm{\varphi}}_1, \\
\! \bm{\Psi}_2 \! =\! \left(\!1 \!+ \!\frac{\beta_2}{\varrho}\cos\theta_{2}\!\right)
\!\sin\!\phi_{2}\bar{\bm{\vartheta}}_2 \!+\! \left(\frac{\beta_2}{\varrho} + \cos\theta_{2}\right)
\cos\!\phi_{2}\bar{\bm{\varphi}}_2,
\end{cases} \!\!\!\!\!\!\!\!\!
\end{equation}
where $\bar{\bm{\vartheta}}$ and $\bar{\bm{\varphi}}$ are the local spherical unit vectors spanning the transverse plane perpendicular to the propagation direction.
\end{cor}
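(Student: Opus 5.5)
We would prove the corollary by specializing the aperture-integral representation in the proof of Theorem~\ref{thm:E_radiation} to the two modes. We evaluate the two surface integrals defining $\bm{A}$ and $\bm{F}$ in closed form, then project the far-field expression \eqref{eq:Epsi} onto $\bar{\bm{\vartheta}},\bar{\bm{\varphi}}$ using Definition~\ref{defi:local_spherical}. The scalar factor that remains after pulling out the common prefactor of \eqref{eq:E_field_rigorous} is split into a direction-dependent scalar, which becomes $\mathcal{S}_q$, and a transverse vector, which becomes $\bm{\Psi}_q$.

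\textbf{Step 1: aperture fields in the LCS.} Substituting $(u,v)=(1,0)$ into \eqref{eq:Et_mode1}, the $\bm{j}$-component vanishes because of the factor $v/b$. The remaining component has magnitude $\frac{1}{a}\sin\!\big(\frac{\pi}{a}(y'+\frac a2)\big)=\frac1a\cos(\pi y'/a)$, where $y'$ is the offset across the side of length $a$. So TE$_{1,0}$ is a single linearly polarized tangential field: it is cosine-tapered along the $a$-side and uniform along the $b$-side. TE$_{0,1}$ is the same structure with the roles of $a$ and $b$, and of the two transverse axes, interchanged.

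In the port's LCS the aperture lies in the plane $\bar z=0$, with normal $\bm{n}=-\bar{\bm k}$. We align the $b$-side with $\bar x$ and the $a$-side with $\bar y$. With this choice the TE$_{1,0}$ field points along $\bar{\bm i}$ and the TE$_{0,1}$ field along $\bar{\bm j}$. This orientation is the one consistent with the arguments $\sin\theta\cos\phi$ paired with $b$ and $\sin\theta\sin\phi$ paired with $a$ in the stated $\mathcal{S}_q$. Then \eqref{eq:J_s}--\eqref{eq:M_s} give constant-direction currents, with $\bm{J}_s$ proportional to $\beta/(\omega\mu)$ times the field and $\bm{M}_s$ orthogonal to it.

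\textbf{Step 2: the integrals.} Since the current directions are fixed, $\bm{A}$ and $\bm{F}$ are each a fixed vector times the same scalar integral. That integral factorizes as
\[
\int_{-b/2}^{b/2} e^{j\varrho \bar x\sin\theta\cos\phi}\,\mathrm{d}\bar x \;\cdot\; \int_{-a/2}^{a/2}\cos\!\Big(\frac{\pi \bar y}{a}\Big)e^{j\varrho \bar y\sin\theta\sin\phi}\,\mathrm{d}\bar y .
\]
The first factor equals $b$ times the sinc function in $\mathcal{S}_1$. The second is the standard cosine-taper transform, $\frac{2a}{\pi}\,\frac{\cos(\frac{a\pi}{\lambda}\sin\theta\sin\phi)}{1-(\frac{2a}{\lambda}\sin\theta\sin\phi)^2}$, obtained by writing the cosine as two exponentials and combining the two resulting sines. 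The constants $ab$, the factor $2/\pi$, the $1/\varrho_q^2$ inherited from \eqref{eq:Et_mode1}, $1/(4\pi)$, and $1/\sqrt N$ collect into the prefactor of \eqref{eq:E_field_rigorous}. The attenuation and phase factors come from Theorem~\ref{Thm:h_WP}, plus the free-space term $e^{-(\alpha_{\text{A}}/2+j\varrho)\|\bar{\bm\psi}\|}$. TE$_{0,1}$ follows by the swap $a\leftrightarrow b$, $\cos\phi\leftrightarrow\sin\phi$.

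\textbf{Step 3: polarization vector.} We project $\bar{\bm i}$ and $\bar{\bm j}$ onto the spherical basis. For the $\bar{\bm i}$-directed field, $\bar{\bm i}\cdot\bar{\bm\vartheta}=\cos\theta\cos\phi$ and $\bar{\bm i}\cdot\bar{\bm\varphi}=-\sin\phi$. We substitute these into $-\hat{\bm r}\times(\hat{\bm r}\times\bm A+\sqrt{\mu/\varepsilon}\,\bm F)$, using $\hat{\bm r}\times\bar{\bm\vartheta}=\bar{\bm\varphi}$ and $\hat{\bm r}\times\bar{\bm\varphi}=-\bar{\bm\vartheta}$. The electric-current term contributes weight $\beta/\varrho$ with the pattern $(\cos\theta\cos\phi,\,-\sin\phi)$. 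The magnetic-current term contributes weight $1$ with the rotated pattern $(\cos\phi,\,-\cos\theta\sin\phi)$. Adding the two gives $(1+\frac{\beta_1}{\varrho}\cos\theta)\cos\phi\,\bar{\bm\vartheta}+(\frac{\beta_1}{\varrho}+\cos\theta)\sin\phi\,\bar{\bm\varphi}$, possibly up to an overall sign.

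\textbf{Main obstacle.} We expect the hardest step to be the sign and orientation bookkeeping. This includes the inward versus outward normal ($\bm n=-\bar{\bm k}$), the relative sign between the $\bm{J}_s$ and $\bm{M}_s$ contributions, and whether the $\bar{\bm\varphi}$ component carries $+\sin\phi$ or $-\sin\phi$. Any global sign or constant can be absorbed into the prefactor or into $s_{m,q}$, but the relative sign between the two components affects the polarization. We would fix these conventions first by checking the broadside limit $\theta=0$. In that limit both expressions must reduce to $(1+\beta_q/\varrho)$ times the aperture field direction. This anchors the relative sign, after which the TE$_{0,1}$ case follows by symmetry.
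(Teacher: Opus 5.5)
You follow the same route as the paper (Appendix~\ref{sec:AppA}, step (c): evaluate the two Fourier-type aperture integrals in closed form, then project onto the local spherical basis). Steps 1--2 are correct. The TE$_{1,0}$ aperture field is $\bar{\bm{i}}$-directed, uniform over the $b$-side and cosine-tapered over the $a$-side. Your sinc and cosine-taper transforms reproduce $\mathcal S_1$ exactly, normalized to $1$ at $\theta_1=0$, and $\mathcal S_2$ follows by $a\leftrightarrow b$.

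\textbf{The gap is in Step 3.} Your two contributions are individually right: $\frac{\beta_1}{\varrho}(\cos\theta\cos\phi,\,-\sin\phi)$ from $\bm J_s$ and $(\cos\phi,\,-\cos\theta\sin\phi)$ from $\bm M_s$. But they sum to
\[
\Big(1+\frac{\beta_1}{\varrho}\cos\theta\Big)\cos\phi\,\bar{\bm{\vartheta}}-\Big(\frac{\beta_1}{\varrho}+\cos\theta\Big)\sin\phi\,\bar{\bm{\varphi}},
\]
not to the stated $\bm\Psi_1$. An ``overall sign'' cannot change a relative sign between components.

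Your broadside anchor does not rescue the statement; it refutes it. At $\theta_1=0$ one has $\bar{\bm{\vartheta}}=\cos\phi\,\bar{\bm{i}}+\sin\phi\,\bar{\bm{j}}$ and $\bar{\bm{\varphi}}=-\sin\phi\,\bar{\bm{i}}+\cos\phi\,\bar{\bm{j}}$. The stated $\bm\Psi_1$ then becomes $(1+\beta_1/\varrho)(\cos 2\phi_1\,\bar{\bm{i}}+\sin 2\phi_1\,\bar{\bm{j}})$. This depends on $\phi_1$ at the pole, so it is not the far field of any aperture under any normal or current-sign convention. The same defect appears at $\theta_1=\pi$, since $\beta_1<\varrho$. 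The minus-sign version reduces to $(1+\beta_1/\varrho)\bar{\bm{i}}$, as it must.

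The stated $\bm\Psi_2$ does pass the test: $\sin\phi\,\bar{\bm{\vartheta}}+\cos\phi\,\bar{\bm{\varphi}}=\bar{\bm{j}}$ at $\theta=0$. It is exactly the image of the corrected $\bm\Psi_1$ under the $\pi/2$ rotation about $\bar{\bm{k}}$, i.e.\ $\phi\mapsto\phi-\pi/2$. The naive interchange $\cos\phi\leftrightarrow\sin\phi$ is a reflection, which also reverses $\bar{\bm{\varphi}}$, so your ``follows by symmetry'' must use the rotation.

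Executed correctly, your argument establishes $\mathcal S_1$, $\mathcal S_2$ and $\bm\Psi_2$, but $\bm\Psi_1$ only with its $\bar{\bm{\varphi}}$-component negated. The paper's appendix merely asserts the closed form. You should flag this apparent sign error in the statement rather than force agreement.

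Relatedly, fix the normal explicitly before summing. From \eqref{eq:J_s}--\eqref{eq:M_s}, $\bm J_s\propto\bm n\times(\bm n\times\bm E)=-\bm E$ does not depend on the sign of $\bm n$, whereas $\bm M_s=-\bm n\times\bm E$ flips with it.
\begin{itemize}
\item With $\bm n=+\bar{\bm{k}}$, the two contributions add as above.
\item With the choice $\bm n=-\bar{\bm{k}}$ you made in Step~1, they subtract. Up to sign this gives $(1-\frac{\beta_1}{\varrho}\cos\theta)\cos\phi\,\bar{\bm{\vartheta}}+(\frac{\beta_1}{\varrho}-\cos\theta)\sin\phi\,\bar{\bm{\varphi}}$, peaked at $\theta=\pi$.
\end{itemize}
The stated $\bm\Psi_1$ combines the obliquity factors of the first case with the $\bar{\bm{\varphi}}$-sign of the second. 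This is another way to see that no single consistent convention produces it. Note also that the $(1+\frac{\beta_q}{\varrho}\cos\theta)$ factors used throughout the paper require $\bm n=+\bar{\bm{k}}$, i.e.\ radiation toward $\theta=0$. That is in tension with the pointing direction $-\bar{\bm{k}}$ of Definition~\ref{defi:LCS_GCS}.
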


\begin{figure}[!tb]
    \centering
    \includegraphics[width=0.8\linewidth]{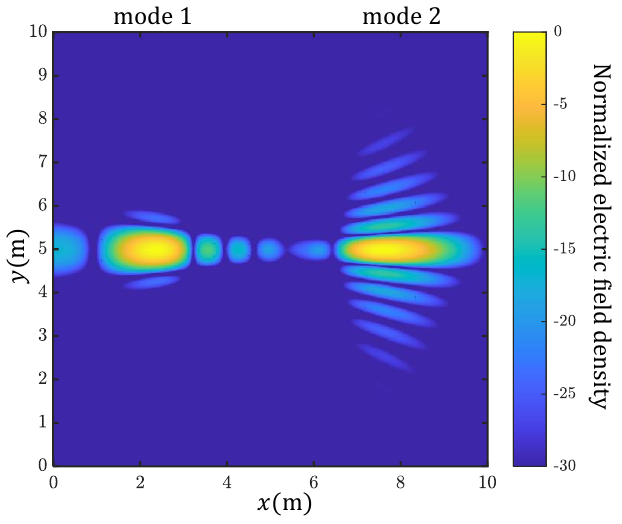}
    \caption{Spatial distribution of the normalized radiated electric-field intensity (dB scale) for a dual-mode PA. The two ports generate spatially separated beams.}
    \label{fig:energy_map}
\end{figure}

Fig. \ref{fig:energy_map} illustrates the normalized radiated intensity, $\|\bm E_{m,n,q}(\bm{\psi})\|^2$ on the $z=0$ plane for a dual-mode PA positioned at the ceiling center, with port angles set to $\pm \frac{\pi}{4}$. 
The two ports produce distinct, spatially separated beams, confirming the inherent spatial selectivity of MMPASS. This selectivity is key to reducing inter-user interference in multi-user scenarios.

Leveraging \eqref{eq:E_PA^t} and \eqref{eq:E_field_rigorous}, we can obtain the electric fields of the $q$-th mode at the PA $\bm{\psi}^{(\text{PA})}_{m,n}$ and the user $\bm{\psi}_{k}^{(\text{UR})}$, respectively, which further enables the derivation of the wireless channel gain. Given that \eqref{eq:E_PA^t} is formulated in the Cartesian coordinate system while \eqref{eq:E_field_rigorous} employs spherical coordinates, we characterize the resulting channel gain in Theorem \ref{prop:H_PU} by independently accounting for the amplitude attenuation and the phase shift encountered across the wireless propagation path.

\begin{thm}[PA-to-User Channel Gain in MMPASS]\label{prop:H_PU}
Considering transmission from the $q$-th port of the $(m,n)$-th PA to the $k$-th user, the channel gain can be expressed as
\begin{eqnarray} \label{eq:h_channel}
&&\hspace{-0.8cm}h^{(\text{P }\to\text{ U})}_{m,n,q,k}  =  \frac{\left\|\bm{E}_{m,n,q}(\bm{\psi}_{k}^{(\text{UR})}) \right\|}{\left\|\bm{E}_{m,n,q}^{(\text{PA})}\right\|} e^{-j\varrho\|\bar{\bm{\psi}}_{m,n,k}^{(\text{UR})} \|} \\
&&\hspace{-0.8cm} = \! \frac{  a b \omega \mu  s_{m,q}  }{\lambda \varrho^2_q   \|\bar{\bm{\psi}}_{m,n,q}\| } \frac{ \mathcal{S}_q (\theta_{m,n,q}, \phi_{m,n,q})  \left\| \bm{\Psi}_q \right\|}{\left\|\bm{\mathcal{E}}_{m,q}^{(\text{WG})}(\bm{\psi}^{(\text{PA})}_{m,n} ) \right\|} e^{-\|\bar{\bm{\psi}}_{m,n,k}^{(\text{UR})} \|(j\varrho+\frac{1}{2}\alpha_{\text{A}})}, \notag
\end{eqnarray}
where $\bar{ \bm{\psi} }_{m,n,k}^{ (\text{UR}) }$ is the coordinates of the receiving antenna within the LCS centered at $\bm{\psi}^{(\text{PA})}_{m,n}$.
\end{thm}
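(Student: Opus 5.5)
The plan is to evaluate the defining ratio in \eqref{eq:h_channel} directly. The magnitude comes from Theorem~\ref{thm:E_radiation} and \eqref{eq:E_PA^t}, and the phase factor is attached separately, as the statement prescribes. Concretely, the channel gain is the ratio of the field amplitude at the user to the field amplitude at the PA radiation point, multiplied by the free-space phase term $e^{-j\varrho\|\bar{\bm\psi}^{(\text{UR})}_{m,n,k}\|}$. The work therefore splits into three steps: computing the numerator, computing the denominator, and simplifying their quotient.

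\emph{Numerator.} I will set $\bm\psi=\bm\psi^{(\text{UR})}_k$ in \eqref{eq:E_field_rigorous}, so that $\bar{\bm\psi}_{m,n,q}=\bar{\bm\psi}^{(\text{UR})}_{m,n,k}$. Taking the Euclidean norm, the unit-modulus factors $j$ and $e^{-j(\beta_q x^{(\text{PA})}_{m,n}+\varrho\|\bar{\bm\psi}\|)}$ drop out. What remains is
\begin{equation*}
\frac{\varrho ab\omega\mu |s_{m,q}|}{2\pi\varrho_q^2\sqrt N\|\bar{\bm\psi}\|}\,
e^{-\frac12(\alpha_{\text W}x^{(\text{PA})}_{m,n}+\alpha_{\text A}\|\bar{\bm\psi}\|)}\,
|\mathcal S_q|\,\|\bm\Psi_q\|.
\end{equation*}

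\emph{Denominator.} Using \eqref{eq:E_PA^t} together with \eqref{eq:Et_mode1} evaluated at $\bm\psi^{(\text{PA})}_{m,n}$, I get
\begin{equation*}
\|\bm E^{(\text{PA})}_{m,n,q}\|=\tfrac{1}{\sqrt N}\sqrt{e^{-\alpha_{\text W}x^{(\text{PA})}_{m,n}}}\,\|\bm{\mathcal E}^{(\text{WG})}_{m,q}(\bm\psi^{(\text{PA})}_{m,n})\|.
\end{equation*}

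\emph{Quotient.} Dividing the two, the factors $1/\sqrt N$ and $e^{-\frac12\alpha_{\text W}x^{(\text{PA})}_{m,n}}$ cancel exactly. This is consistent with the waveguide loss being carried entirely by Theorem~\ref{Thm:h_WP}, so no attenuation is counted twice. Next I substitute $\varrho=2\pi/\lambda$, which gives $\varrho/(2\pi)=1/\lambda$. This turns the prefactor into $ab\omega\mu/(\lambda\varrho_q^2\|\bar{\bm\psi}\|)$. Finally, multiplying by the phase term and merging it with the atmospheric factor produces $e^{-\|\bar{\bm\psi}^{(\text{UR})}_{m,n,k}\|(j\varrho+\frac12\alpha_{\text A})}$, which is the claimed expression.

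\emph{Main obstacle: bookkeeping.} The difficulty is not analytic; it lies in the conventions, and I will make three of them explicit.
\begin{itemize}
\item The statement keeps $s_{m,q}$ in the gain. I will treat the source amplitude as retained in the numerator while the denominator is written through the normalized pattern $\bm{\mathcal E}^{(\text{WG})}_{m,q}$, whose prefactor absorbs $s_{m,q}$. I will note that the common complex phase of $s_{m,q}$ is irrelevant to a magnitude ratio.
\item The sign of $\mathcal S_q$ is kept rather than taking $|\mathcal S_q|$. This is a real-valued pattern factor convention that I will justify by noting that $\mathcal S_q$ is real.
\item The residual mode-dependent phase $\beta_q x^{(\text{PA})}_{m,n}$ is dropped from $h^{(\text{P}\to\text{U})}$ because it is already accounted for in $h^{(\text{W}\to\text{P})}$.
\end{itemize}
With these conventions stated, the remaining algebra is routine.
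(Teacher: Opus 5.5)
Your proposal is correct and follows the paper's own (implicit) derivation. You substitute Theorem~\ref{thm:E_radiation} and \eqref{eq:E_PA^t} into the ratio of norms, cancel $1/\sqrt{N}$ and $e^{-\frac{1}{2}\alpha_{\text{W}}x^{(\text{PA})}_{m,n}}$, use $\varrho/(2\pi)=1/\lambda$, and merge the attached phase with the atmospheric factor. One correction to your bookkeeping: $\mathcal{S}_q$ being real does not justify keeping its sign. The ratio of norms gives $|s_{m,q}|\,|\mathcal{S}_q|$, which matches the stated $s_{m,q}\mathcal{S}_q$ only up to the phase of $s_{m,q}$ and the sign of $\mathcal{S}_q$ (positive on the main lobe but negative in some sidelobes). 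This looseness comes from the statement itself, not from a gap in your argument.
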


By aggregating the individual channel coefficients from \eqref{eq:h_channel} across all users and PA ports, we obtain the global wireless channel matrix $\bm{H}^{(\mathrm{P} \to \mathrm{U})} \in \mathbb{C}^{K \times QMN}$ characterizing the propagation links between every radiating port across the $M$ waveguides and all $K$ users. Specifically, $h^{(\mathrm{P} \to \mathrm{U})}_{m,n,q,k}$ defined in \eqref{eq:h_channel} is mapped to the entry at the $k$-th row and the $[(m-1)NQ + (n-1)Q + q]$-th column.

\subsection{Polarization-Aware Reception}\label{subsec:effective_E_rx}
While the previous subsection established the radiation patterns and intrinsic polarization states, this subsection incorporates the dynamic effects of port rotation. By utilizing Jones matrices, we characterize the polarization-aware signal reception, thereby capturing the polarization matching efficiency as a function of the relative orientation between the PA ports and the receiving antenna.

Given the radiated field in \eqref{eq:E_field_rigorous}, we define two specific components in LSCS to yield a more intuitive formulation as 
\begin{eqnarray}\label{e:E_LSCS}
    \hspace{-0.3cm}\bm{E}_{m,n,q}(\bm{\psi}) &\hspace{-0.3cm}\triangleq&\hspace{-0.3cm} E_{m,n,q}^{\theta}(\bm{\psi})\bar{\bm{\vartheta}}_{m,n,q} + E_{m,n,q}^{\phi}(\bm{\psi})\bar{\bm{\varphi}}_{m,n,q}\\
    \hspace{-0.3cm}&\hspace{-0.3cm}\overset{(a)}{=}&\hspace{-0.3cm} E_{m,n,q}^{\theta}(\bm{\psi})\bar{\bm{\vartheta}}_{m,n,q,k} - E_{m,n,q}^{\phi}(\bm{\psi})\bar{\bm{\varphi}}_{m,n,q,k}.\notag
\end{eqnarray}
where (a) follows from adopting the LSCS centered at $\bm{\psi}_{k}^{(\text{UR})}$ to characterize the incident polarization state from the perspective of the receiving antenna. Then, the normalized Jones vector of the incident electric field can be obtained as
\begin{equation}\label{e:incident_Jones}
    \bar{\bm{n}}_{m,n,q}^{(i)}(\bm{\psi}) = \begin{bmatrix}
    \frac{E_{m,n,q}^{\theta}(\bm{\psi})}{\left\|\bm{E}_{m,n,q}(\bm{\psi}) \right\| }
    & \frac{- E_{m,n,q}^{\phi}(\bm{\psi})}{\left\|\bm{E}_{m,n,q}(\bm{\psi}) \right\| }
\end{bmatrix}\!\!\!
\begin{bmatrix}
\bar{\bm{\vartheta}}_{m,n,q,k}\\\bar{\bm{\varphi}}_{m,n,q,k}
\end{bmatrix}.
\end{equation}
which is a linear combination of the unit basis vectors corresponding to the polar and azimuthal angles. Therefore, it inherently resides in the transverse plane orthogonal to the propagation vector.

The polarization matching efficiency, formulated in Theorem \ref{lem:polarization_efficiency}, is derived by projecting the incident Jones vector from \eqref{e:incident_Jones} onto the antenna's normalized Jones vector $\bar{\bm{n}}_{m,n,k}$. Under the assumption of linear polarization for both the radiated field and the user antenna, which is standard for most radiating elements such as dipoles, this efficiency reduces to the squared magnitude of the inner product between the two vectors \cite{goldstein2017polarized}.

\begin{thm}[Polarization Matching Coefficient]\label{lem:polarization_efficiency}
Consider an incident wave originating from the $q$-th mode of the $n$-th port on the $m$-th waveguide. Let $\bar{\bm{n}}_{m,n,q}^{(i)}(\bm{\psi}_{k}^{(\mathrm{UR})})$ and $\bar{\bm{n}}_{m,n,k}^{(r)}$ denote the normalized Jones vectors of the incident field and the antenna of user $k$, respectively, both defined within the $k$-th user's LSCS. The resulting polarization matching efficiency, which quantifies the fraction of power captured by the antenna, is given by
\begin{equation} \label{eq:plf_def}
    \eta_{m,n,q,k} = \big|(\bar{\bm{n}}_{m,n,q,k}^{(r)})^\top \bar{\bm{n}}_{m,n,q}^{(i)}(\bm{\psi}_{k}^{(\mathrm{UR})}) \big| .
\end{equation}
\end{thm}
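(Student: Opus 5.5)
The plan is to derive the matching efficiency from the open-circuit voltage induced at the user antenna by the incident field, and then normalize by the voltage a perfectly matched antenna would receive. I would model the receive antenna of user $k$ by its vector effective length $\bm{\ell}_k = \ell_k \bar{\bm{n}}^{(r)}_{m,n,q,k}$, where $\bar{\bm{n}}^{(r)}_{m,n,q,k}$ is a unit vector lying in the transverse plane of the user's LSCS, i.e., in the span of $\bar{\bm{\vartheta}}_{m,n,q,k}$ and $\bar{\bm{\varphi}}_{m,n,q,k}$. The induced voltage is then $V = \bm{\ell}_k^\top \bm{E}_{m,n,q}(\bm{\psi}_k^{(\mathrm{UR})})$. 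The polarization matching efficiency $\eta_{m,n,q,k}$ is the ratio $|V|/|V_{\max}|$, where $V_{\max}$ is obtained by maximizing $|V|$ over all unit receive polarizations in that transverse plane.

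The key steps, in order, are as follows.

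\textbf{Step 1: write the incident field in the user frame.} Using the decomposition \eqref{e:E_LSCS}, I would express $\bm{E}_{m,n,q}(\bm{\psi}_k^{(\mathrm{UR})})$ as a component along $\bar{\bm{\vartheta}}_{m,n,q,k}$ and a component along $\bar{\bm{\varphi}}_{m,n,q,k}$. The sign flip on the $\bar{\bm{\varphi}}$ component comes from reversing the propagation direction when the frame is moved from the PA to the user. The incident field can then be written compactly as $\|\bm{E}_{m,n,q}\|\,\bar{\bm{n}}^{(i)}_{m,n,q}(\bm{\psi}_k^{(\mathrm{UR})})$, with $\bar{\bm{n}}^{(i)}$ as in \eqref{e:incident_Jones}.

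\textbf{Step 2: compute the voltage.} It follows directly that $V = \ell_k \|\bm{E}_{m,n,q}\|\,(\bar{\bm{n}}^{(r)})^\top\bar{\bm{n}}^{(i)}$, since both Jones vectors are expressed in the same orthonormal transverse basis.

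\textbf{Step 3: maximize over receive polarizations.} By Cauchy--Schwarz, $|V| \le \ell_k\|\bm{E}_{m,n,q}\|$ because both Jones vectors have unit norm. Equality holds when $\bar{\bm{n}}^{(r)} = \bar{\bm{n}}^{(i)}$, which is admissible under the linear-polarization assumption because $\bar{\bm{n}}^{(i)}$ is real up to a common phase. This gives $|V_{\max}| = \ell_k\|\bm{E}_{m,n,q}\|$, and dividing yields \eqref{eq:plf_def}.

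\textbf{Step 4: reconcile with the power interpretation.} The quantity in \eqref{eq:plf_def} is an amplitude (field) coefficient; its square is the usual power loss factor of \cite{goldstein2017polarized}. Since it enters the channel multiplicatively through $\bm{\Lambda}$ in \eqref{eq:end_to_end}, which acts on field amplitudes, the unsquared form is the appropriate one.

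The main obstacle is Step 1: justifying that the incident field at the user is purely transverse, and that the basis change from the PA's LSCS to the user's LSCS is exactly the sign flip in \eqref{e:E_LSCS}. For the first point I would invoke the far-field form of Theorem~\ref{thm:E_radiation}, where $\bm{\Psi}_q$ has no radial component. For the second, I would check that the user's outward radial direction is $-\bar{\bm{\upsilon}}$. The $\bar{\bm{\vartheta}}$ direction then maps to itself up to the chosen orientation convention, and $\bar{\bm{\varphi}}$ flips, so that the user's triad stays right-handed. This requires the rotation of Definition~\ref{defi:LCS_GCS} to act consistently on both frames. Any global phase, including complex $E^\theta$ and $E^\phi$ sharing a common phase, is removed by the modulus.
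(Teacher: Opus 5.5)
Your proposal is correct and follows the route the paper indicates: the incident Jones vector \eqref{e:incident_Jones} is projected onto the receive Jones vector under the linear-polarization assumption, and you make this explicit through the induced voltage $V=\bm{\ell}_k^\top\bm{E}_{m,n,q}$ with Cauchy--Schwarz normalization. Your Step~4 also correctly resolves a mismatch in the paper: its prose calls the efficiency the squared magnitude of the inner product, but the displayed unsquared $\eta_{m,n,q,k}$ is an amplitude factor, which is how it is used in \eqref{eq:r_dual_compact} and why the factor $2$ appears in \eqref{e:ln_H_2}.
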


Finally, incorporating the analysis from the preceding subsections, the received signal for $s_{m,q}$ propagating through the $n$-th PA to the $k$-th user is expressed as:
\begin{equation}\label{eq:r_dual_compact}
r_{m,n,q,k}
=
\eta_{m,n,q,k}\cdot h^{(\text{P} \to\text{U})}_{m,n,q,k} \cdot h^{(\text {W} \to\text{P})}_{m,n,q}\cdot s_{m,q}.
\end{equation}
Because distinct guided modes have different polarization states and radiation patterns, the polarization matching efficiency varies with the mode index $q$ and the user's location. Moreover, all $Q$ ports of a given PA share the same physical position, coupling the optimization across modes. These unique characteristics distinguish MMPASS from conventional multi-antenna systems and motivate the optimization framework developed in Sec.~\ref{sec:IV}.

\section{Sum-Rate Maximization with MMPASS}\label{sec:IV}
The physically complete electromagnetic model developed in the previous section reveals that MMPASS is determined by a rich set of design variables. Exploiting the high-dimensional design space is essential for achieving the full potential of MMPASS. In this section, we formulate the sum-rate maximization problem and develop a structured optimization framework that proceeds from fundamental single-user insights to general multi-user, multi-PA configurations.

The optimization problem is formulated as follows:
\begin{subequations}\label{prob:sumrate}
\begin{align}
\hspace{-0.5cm}\text{(P1):} \quad & \!\!\!\!\! \max_{ \{\!x_{m,n}^{(\text{PA})}\!\}, \{\!\xi_{m,n,q}^{(\text{PA})}\!\},\{\!\delta_{m,n,q}^{(\text{PA})}\!\},\bm{W},\,\{\bm{n}_k^{(r)}\}} \quad \!\!\!\!\!\!\!\!\!\!\!\! R_{\mathrm{sum}} =
\sum_{k=1}^{K} R_k, \label{e:P1a}
 \\
\hspace{-0.5cm}\text{s.t.} \quad  \hspace{0.2cm}& \hspace{1cm}\mathrm{tr}\!\left(\bm{W}\bm{W}^{\mathsf H}\right)\le 1, \label{e:P1b}\\
\hspace{-0.5cm}&\hspace{1cm} 0<x_{m,n}^{(\text{PA})} \leq D_x, \label{e:P1c}\\
\hspace{-0.5cm}&\hspace{1cm} x_{m,(n+1)}^{(\text{PA})}- x_{m,n}^{(\text{PA})} \geq \frac{\lambda}{2}, \label{e:P1d}\\
\hspace{-0.5cm}&\hspace{1cm} 0 \leq \delta_{m,n,q}^{(\text{PA})} \leq \frac{\pi}{2}, \label{e:P1e}\\
\hspace{-0.5cm}&\hspace{1cm} -\frac{\pi}{2} \leq \xi_{m,n,q}^{(\text{PA})} \leq \frac{\pi}{2}, \label{e:P1f}\\
\hspace{-0.5cm}&\hspace{1cm} \| \bm{n}_k\| = 1, \label{e:P1g}
\end{align}
\end{subequations}
$R_k$ in \eqref{e:P1a} is the achievable data rate of the $k$-th user, which is defined as
\begin{equation}\label{eq:k_rate}
R_k
= \frac{1}{2}\log_2 \left(1+\frac{P\,\big|\bm{h}_k\bm{w}_k\big|^2}
{P\sum_{i\neq k}\big|\bm{h}_k\bm{w}_i\big|^2+\sigma^2}\right).
\end{equation}

Problem (P1) formulates an optimization framework that jointly integrates physical layer design variables and the digital beamforming matrix. The associated constraints ensure the practical implementability and physical consistency of the solution:
\eqref{e:P1c} and \eqref{e:P1d} constrain antenna placement within the waveguide's physical boundaries while imposing a minimum inter-element spacing to mitigate mutual coupling;
\eqref{e:P1e} and \eqref{e:P1f} define the feasible ranges for the rotation angles;
\eqref{e:P1b} and \eqref{e:P1g} enforce normalization requirements on the digital precoders and the receiving polarization vectors, respectively.

To derive analytical insights through closed-form expressions, this section focuses on the practically important case of $Q=2$ modes, specifically the orthogonal $\text{TE}_{1,0}$ and $\text{TE}_{0,1}$ modes. This choice captures the essential features of mode-division multiplexing. 
The extension to higher-order modes follows the same principles.

\subsection{Single-PA Single-User Analysis}\label{sec:IVA}

To gain fundamental insights, we first consider the simplest configuration: $M=N=1$ (one waveguide, one PA) and $K=1$ (one user).
In this case, (P1) reduces to:
\begin{subequations}
\begin{align}
\hspace{-2cm}\text{(P1a):}\hspace{3.5cm}& \hspace{-2.5cm}
\max_{x^{(\text{PA})}, \delta^{(\text{PA})}, \xi^{(\text{PA})}, \bm{n}^{(r)}} 
R, \label{e:2a}\\
\hspace{-2cm}\text{s.t.}\hspace{3.7cm}
& \hspace{-2.3cm} \eqref{e:P1c}, \eqref{e:P1e}, \eqref{e:P1f}, \eqref{e:P1g}.
\label{e:P2b}
\end{align}
\end{subequations}
In this interference-free setup, the sum-rate maximization reduces to the maximization of the received Signal-to-Noise Ratio (SNR). For a given noise power variance $\sigma^2$, this objective is equivalent to maximizing $|H|^2$. Exploiting the monotonicity of the logarithmic function, we can transform the multiplicative components of $H$ into a more tractable additive form. Substituting the explicit expression of $H$ derived in Section \ref{sec:III}, the equivalent optimization objective is given by
\begin{eqnarray}\label{e:ln_H_2}
    &&\hspace{-0.8cm} \ln |H|^2 = 2\ln\left( \frac{\omega \mu a b  }{ \lambda  \varrho_q^2\left\|\bm{\mathcal{E}}_{m,q}^{(\text{WG})}(\bm{\psi}^{(\text{PA})}_{m,n} ) \right\|} \left|\left(\bm {n^{(r)}}\right)^\top\bm {\bm n^{(i)} }\right| \right)\\ 
    &&\hspace{-0.7cm}   + 2\ln\bigg[\frac{\cos(\sin\theta \sin\phi\cdot\frac{a}{\lambda}\pi)}{1-4(\sin\theta \sin\phi \cdot \frac{a}{\lambda})^2} \bigg]  + 2\ln\!\bigg[\frac{\sin(\sin\theta \cos\phi\cdot\frac{b}{\lambda}\pi)}{\sin\theta \cos\phi \cdot \frac{b}{\lambda}\pi} \bigg]\!\notag \\
    &&\hspace{-0.3cm}+ 2\ln(1+\cos\theta)- \alpha_{\text{W}}x^{(\text{PA})}-\alpha_{\text{A}}\| \bar{\bm{\psi}}^{\text{(UR)}}\|- 2 \ln {\| \bar{\bm{\psi}}^{\text{(UR)}}\|}. \notag 
\end{eqnarray}

While all optimization variables in (P1a) are inherently coupled, for any given PA position, the optimal orientation should always maximize the energy delivered to the user. Consequently, the optimal deployment angles can be determined a priori for a fixed $x^{(\text{PA})}$.

\begin{lem}[Optimal PA Orientation for Single User]\label{thm:single_angle}
For a fixed PA position, the orientation angles that maximize the received power are those that align the main lobe of the radiation pattern with the user direction. In the LCS of the PA, let the user's coordinates be $(\bar{x}^{(\text{UR})},\bar{y}^{(\text{UR})},\bar{z}^{(\text{UR})})$. Then the optimal pitch and roll angles are
\begin{align}
    \delta^{(PA)^*} &= \arctan\left( \frac{ \bar{x}^{(\text{UR})} }{ \sqrt{ \left( \bar{y}^{(\text{UR})}  \right)^2 + \left( \bar{z}^{(\text{UR})}   \right)^2 } } \right), \label{e:theta}\\
\xi^{(PA)^*} &= \arctan\left( -\frac{ \bar{y}^{(\text{UR})} }{ \bar{z}^{(\text{UR})}  } \right).\label{e:xi}
\end{align}
\end{lem}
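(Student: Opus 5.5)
Fix the PA position $x^{(\text{PA})}$. In \eqref{e:ln_H_2}, the terms $-\alpha_{\text{W}}x^{(\text{PA})}$, $-\alpha_{\text{A}}\|\bar{\bm{\psi}}^{(\text{UR})}\|$ and $-2\ln\|\bar{\bm{\psi}}^{(\text{UR})}\|$ do not depend on the orientation, since a rotation leaves the norm of the relative position vector unchanged. The prefactor involving $\|\bm{\mathcal{E}}^{(\text{WG})}\|$ does not depend on it either. The polarization term $|(\bm n^{(r)})^\top\bm n^{(i)}|$ can be made equal to $1$ for any incident Jones vector by choosing $\bm n^{(r)}$ aligned with $\bm n^{(i)}$, which is allowed under \eqref{e:P1g}. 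So the orientation affects only the angular terms: the two pattern factors in $\mathcal{S}_q$ and the obliquity factor $(1+\cos\theta)$. These depend on $(\theta,\phi)$, the local spherical angles of the user direction in the port LCS, which by \eqref{coord_transform} are functions of $(\delta,\xi)$.

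The first step is to show that all three angular factors are maximized at $\theta=0$ in the sense of pointing toward the user. This direction corresponds to the LCS boresight axis $\pm\bar{\bm k}$. The paper's convention is that the port points along $-\bar{\bm k}$, so the relevant direction is whichever one makes $\sin\theta=0$ together with the obliquity maximum; I will take the port's pointing direction as the axis of symmetry. At that direction $\sin\theta=0$, so both factors $\frac{\sin X}{X}$ and $\frac{\cos Y}{1-(2Y/\pi)^2}$ reach their maximum value $1$. I will use the elementary facts that $|\sin X/X|\le1$ and that $\left|\cos(\pi t)/(1-4t^2)\right|\le1$ for all real $t$ (the singularity at $t=\pm1/2$ is removable). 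The factor $1+\cos\theta\le2$ is maximal at $\theta=0$. Hence each factor attains its global maximum simultaneously when the port's pointing direction coincides with the unit vector toward the user. This yields the main-lobe alignment claim.

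The second step is purely geometric: solve for $(\delta,\xi)$ such that the rotated pointing axis $\bm R_x(\xi)\bm R_y(\delta)(-\bar{\bm k})$ is parallel to the user direction $\bm d=(\bar x,\bar y,\bar z)$. Expanding $\bm R_x(\xi)\bm R_y(\delta)\bm k=(\sin\delta,\,-\sin\xi\cos\delta,\,\cos\xi\cos\delta)$ and matching it (up to sign and a positive scale) with $\bm d$, one obtains $\tan\xi=-\bar y/\bar z$ from the ratio of the second and third components. One also obtains $\tan\delta=\bar x/\sqrt{\bar y^2+\bar z^2}$ by using $\cos\delta=\sqrt{\bar y^2+\bar z^2}/\|\bm d\|$. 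These are \eqref{e:theta} and \eqref{e:xi}. The final check is that these values lie in the feasible ranges \eqref{e:P1e} and \eqref{e:P1f}, which holds because the user lies below the PA and arctan maps into $(-\pi/2,\pi/2)$.

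The main obstacle is bookkeeping rather than analysis. The user coordinates in the lemma are expressed in an LCS that itself depends on the orientation being chosen, and the sign conventions ($-\bar{\bm k}$ pointing, transposes in \eqref{coord_transform}) must be tracked consistently. I would resolve this by interpreting $(\bar x,\bar y,\bar z)$ as the user offset in the unrotated frame, i.e., $\bm\psi^{(\text{UR})}-\bm\psi^{(\text{PA})}$, and verifying the signs by direct substitution. A secondary subtlety is whether the optimal $\bm n^{(r)}$ and the orientation interact; they decouple because $\bm n^{(r)}$ can always achieve unit matching.
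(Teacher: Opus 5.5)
Your argument is correct, modulo two convention points below, and it takes a different, more direct route than the paper. Appendix~\ref{sec:AppB} proceeds as follows:
\begin{enumerate}
\item It turns $\partial\ln|H|^2/\partial\delta^{(\text{PA})}=0$ into a transcendental equation.
\item It Taylor-expands the cotangent near the main-lobe centre to factor this as $\bar{x}^{(\text{UR})}[\cdots]=0$ with a positive bracket.
\item It repeats this for $\xi^{(\text{PA})}$ to get $\bar{y}^{(\text{UR})}=0$, then solves the geometry.
\item It checks that the Hessian is negative definite.
\item Only in its final paragraph does it bound the angle-dependent factors to claim global optimality. Even there, the cosine-taper factor is asserted maximal only ``in the main-lobe neighborhood''.
\end{enumerate}
You drop the stationarity, Taylor and Hessian steps and promote that last bounding argument to the whole proof. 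You make it genuinely global through $|\sin X/X|\le 1$, $|\cos(\pi t)/(1-4t^2)|\le 1$ for all real $t$, and $1+\cos\theta\le 2$, all attained simultaneously exactly at $\theta=0$. The middle inequality is true but deserves a line of proof. On $[0,\tfrac12]$, $1-4t^2-\cos\pi t$ vanishes at both ends and is convex then concave, so it is nonnegative. On $[\tfrac12,\infty)$, $4t^2-1+\cos\pi t$ increases from $0$. You also treat the polarization factor explicitly: the free choice of $\bm n^{(r)}$ under \eqref{e:P1g} makes it equal $1$ for every orientation, whereas the paper silently drops it. Your route is shorter, free of approximations, and gives uniqueness. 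The paper's route additionally gives local curvature, i.e.\ sensitivity to misalignment, which the lemma does not need.

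Two bookkeeping points should be stated explicitly rather than absorbed ``up to sign''.

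First, the sign of the pitch. With \eqref{coord_transform} and the given $\bm R_x,\bm R_y$, the axis $\bar{\bm k}$ in the GCS is $(\sin\delta,-\sin\xi\cos\delta,\cos\xi\cos\delta)$. Align the stated pointing direction $-\bar{\bm k}$ with $\bm d=\bm\psi^{(\text{UR})}-\bm\psi^{(\text{PA})}$ (where $d_z<0$) using a positive scale, as your direct substitution would. This gives $\tan\xi=-d_y/d_z$, as claimed, but $\tan\delta=-d_x/\sqrt{d_y^2+d_z^2}$. The $+$ sign in \eqref{e:theta} comes from aligning $+\bar{\bm k}$, which requires $\cos\xi\cos\delta<0$ and is impossible in the feasible box. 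The paper's appendix has the same mismatch when it solves $\bar{x}^{(\text{UR})}=0$. Similarly, Definition~\ref{defi:local_spherical} measures $\theta$ from $+\bar{\bm k}$, not from the pointing direction; you already resolved that one by fiat. To land on \eqref{e:theta}, you must say explicitly that the pitch is taken in the opposite sense.

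Second, feasibility. Arctan landing in $(-\pi/2,\pi/2)$ does not establish \eqref{e:P1e}. With \eqref{e:theta} as written, $\delta^{(PA)^*}\in[0,\pi/2]$ requires $\bar{x}^{(\text{UR})}\ge 0$, i.e.\ the PA upstream of the user. Lemma~\ref{thm:optimal_x} guarantees this at the optimum, but not for an arbitrary fixed position. Otherwise the aligned orientation is infeasible and the formula no longer gives the constrained maximizer.
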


\begin{proof}(sketch)
To maximize the power gain, the following first-order optimality conditions must be satisfied:
\begin{equation}
    \frac{\partial}{\partial \delta^{(\text{PA})}} \ln |H|^2 = 0,
    \quad \frac{\partial}{\partial \xi^{(\text{PA})}} \ln |H|^2 = 0.
\end{equation}
Assuming the radiated energy is primarily concentrated within the main lobe, we approximate these transcendental equations using a Taylor expansion near the main lobe. This leads to the optimal alignment conditions, namely $\bar{x}^{(\text{UR})} = 0$ and $\bar{y}^{(\text{UR})} = 0$, from which the closed-form expressions for $\delta^{(PA)^*}$ and $\xi^{(PA)^*}$ can be derived. The detailed derivation can be found in Appendix~\ref{sec:AppB}.
\end{proof}

Following the angular optimization in Lemma \ref{thm:single_angle}, we proceed to optimize the PA's longitudinal coordinate $(x^{(\text{PA})})^*$. The optimization begins with an analysis of the partial derivative of the channel gain with respect to $x^{(\text{PA})}$:
\begin{eqnarray}\label{e:partial_ln_H_2}
&&\hspace{-1cm}\frac{\partial |H|^2}{\partial x^{(\text{PA})}}
{=} |H|^2\bigg[- \alpha_{\text{W}} + \alpha_{\text{A}}\frac{ \bar{x}^{(\text{UR})}\! }{ \left\| \bar{\bm{\psi}}^{(\text{UR})} \right\| } + 2   \frac{ \bar{x}^{(\text{UR})}  }{ \left\| \bar{\bm{\psi}}^{(\text{UR})} \right\|^2 }\bigg].
\end{eqnarray}
Eq. \eqref{e:partial_ln_H_2} illustrates the competing physical mechanisms governing optimal PA placement, including: (i) {\it waveguide attenuation} imposes a constant penalty per unit length due to propagation losses within the structure; (ii) {\it atmospheric absorption loss} is positively correlated with the wireless propagation distance; and (iii) the increasing wireless distance leads to a {\it larger wavefront}, thereby heightening the free-space path loss.

Collectively, the optimal position is governed by the interplay between the attenuation within the waveguide and the reduction in both atmospheric absorption and geometric spreading losses characteristic of the wireless link. By imposing the optimality condition $\frac{\partial \ln |H|^2}{\partial x^{(\text{PA})}} = 0$, we derive the optimal placement for the single-user scenario.

\begin{lem}[Optimal PA Position for Single User] \label{thm:optimal_x}
Regarding the optimal PA orientation, the optimal position $\big(x^{(\text{PA})}\big)^*$ is characterized by the following conditions.
\begin{enumerate}[leftmargin=0.4cm]
\item The position is bounded within the interval $[0, x^{(\text{UR})}]$.
\item The optimal coordinate is given by:
\begin{equation}\label{e:optimal_y}
\big(x^{(\text{PA})}\big)^* = x^{(\text{UR})} - d^*,
\end{equation}
where $$d^* = \frac{\alpha_{\text{W}}\left[(\bar{y}^{(\text{UR})} )^2 + (\bar{z}^{(\text{UR})} )^2\right]}{2+\alpha_{\text{A}}\sqrt{\left[(\bar{y}^{(\text{UR})} )^2 + (\bar{z}^{(\text{UR})} )^2\right]}}$$ represents the optimal horizontal offset of the PA relative to the user.
\end{enumerate}
\end{lem}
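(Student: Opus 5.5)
The plan is to reduce the objective \eqref{e:ln_H_2} to a one-dimensional function of $x^{(\text{PA})}$ by substituting the optimal orientation of Lemma~\ref{thm:single_angle}, and then to analyze the stationarity condition \eqref{e:partial_ln_H_2}.

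\textbf{Step 1: remove the angular terms.} Under the orientation of Lemma~\ref{thm:single_angle}, the user lies on the main-lobe axis, so $\bar{x}^{(\text{UR})}=\bar{y}^{(\text{UR})}=0$ in the rotated LCS and $\theta=0$. Then both pattern factors in \eqref{e:ln_H_2} equal one and the obliquity term is $2\ln 2$. The receive polarization can be matched, giving $|(\bm n^{(r)})^\top\bm n^{(i)}|=1$. The modal norm $\|\bm{\mathcal{E}}^{(\text{WG})}_{m,q}(\bm\psi^{(\text{PA})}_{m,n})\|$ is evaluated at the waveguide cross-section centre, so by \eqref{eq:Et_mode1} it does not depend on $x$. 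Hence every term except the last three in \eqref{e:ln_H_2} is constant in $x^{(\text{PA})}$.

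\textbf{Step 2: reduce to a scalar problem.} Write $d\triangleq x^{(\text{UR})}-x^{(\text{PA})}$ and $\rho^2\triangleq(\bar y^{(\text{UR})})^2+(\bar z^{(\text{UR})})^2$. Here $\rho$ is the distance from the user to the waveguide axis, measured in the unrotated local frame (pure translation), so it is independent of $x^{(\text{PA})}$. Then $r\triangleq\|\bar{\bm\psi}^{(\text{UR})}\|=\sqrt{d^2+\rho^2}$, and the problem becomes maximizing
\begin{equation*}
f(x)=-\alpha_{\text{W}}x-\alpha_{\text{A}}r-2\ln r .
\end{equation*}
Differentiating reproduces \eqref{e:partial_ln_H_2}:
\begin{equation*}
f'(x)=-\alpha_{\text{W}}+\Big(\frac{\alpha_{\text{A}}}{r}+\frac{2}{r^2}\Big)d .
\end{equation*}

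\textbf{Step 3: the bound in part 1.} For $x^{(\text{PA})}>x^{(\text{UR})}$ we have $d<0$, so every term of $f'$ is negative and $f$ is strictly decreasing there. Therefore any maximizer satisfies $x^{(\text{PA})}\le x^{(\text{UR})}$. Combined with \eqref{e:P1c}, this places the optimum in $[0,x^{(\text{UR})}]$. Boundary maximizers are handled by noting that if the stationary point falls below $0$, then $f'<0$ throughout the feasible region and the optimum is clipped to the endpoint.

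\textbf{Step 4: the closed form in part 2.} Setting $f'(x)=0$ gives the transcendental equation
\begin{equation*}
\alpha_{\text{W}}=\frac{\alpha_{\text{A}}d}{\sqrt{d^2+\rho^2}}+\frac{2d}{d^2+\rho^2}.
\end{equation*}
This is the step I expect to be the main obstacle, since an exact solution leads to a quartic in $d$. I would first justify a small-offset regime, $d\ll\rho$: the waveguide loss $\alpha_{\text{W}}$ is small relative to $2/\rho+\alpha_{\text{A}}$, so the balance point lies near the user's projection. In that regime $r\approx\rho$, and the equation becomes linear, $\alpha_{\text{W}}=d(\alpha_{\text{A}}/\rho+2/\rho^2)$. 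Solving it yields exactly
\begin{equation*}
d^*=\frac{\alpha_{\text{W}}\rho^2}{2+\alpha_{\text{A}}\rho},
\end{equation*}
which gives \eqref{e:optimal_y}.

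\textbf{Step 5: confirm it is a maximum.} The right-hand side of the stationarity equation is increasing in $d$ on $0\le d\le\rho$; the $2d/r^2$ term peaks at $d=\rho$. Hence $f'$ changes sign from positive to negative there, so the stationary point is a local maximum. Since $f$ is decreasing beyond the user (Step 3), it is the maximizer in the relevant range. I would state the approximation $d^*\ll\rho$ explicitly as the condition under which the closed form holds.
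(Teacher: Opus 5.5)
Your proposal is correct and takes essentially the paper's route. You use the first-order condition $F(d)=-\alpha_{\text{W}}+\alpha_{\text{A}}d/\sqrt{d^2+\rho^2}+2d/(d^2+\rho^2)=0$ with $\rho$ measured in the translated (unrotated) frame, use $F(0)<0$ and a sign change to locate the maximizer near the user, and linearize to first order in $d/\rho$, which is exactly what the paper's substitution $d=\rho\cot\omega$ with a Taylor expansion about $\omega=\pi/2$ does. The differences are minor simplifications: you bound $x^{(\text{PA})}\le x^{(\text{UR})}$ via the sign of $f'$ for $d<0$ instead of the reflection argument, and you bracket the root by the monotonicity of $F$ on $[0,\rho]$ instead of locating the extrema of $F$ through a quartic.
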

\begin{proof}(sketch) 
First, if the calculated optimal PA position exceeds $x^{(\text{UR})}$, a corresponding point within the interval $[0, x^{(\text{UR})}]$ can always be identified that maintains an identical wireless propagation distance while incurring lower internal waveguide attenuation. Consequently, the global optimal position must reside within the bounds $[0, x^{(\text{UR})}]$. Then, \eqref{e:optimal_y} can be obtained by solving $\frac{\partial \ln |H|^2}{\partial x^{(\text{PA})}} = 0$. The detailed derivation can be found in Appendix~\ref{sec:AppC}.
\end{proof}

Given the optimal PA orientation and position, we derive the corresponding optimal user antenna polarization. Applying Theorems \ref{Thm:h_WP} and \ref{prop:H_PU}, the electric field distribution at the user location $\bm{\psi}_{k}^{(\mathrm{UR})}$ can be characterized. This field is then reformulated into a spherical coordinate representation as per \eqref{e:E_LSCS}, establishing the incident Jones vector defined in \eqref{e:incident_Jones}. Then, optimal polarization matching is achieved when the user antenna orientation is precisely aligned with the incident wave within the transverse plane orthogonal to the propagation direction, as specified in \eqref{eq:plf_def}, yielding:
\begin{equation}\label{eq:optimal_user_orientation_2D}
\bar{\bm{n}}_{m,n,q,k}^{(r)}
\!=\!
\frac{1}{\mathcal V} \! 
\begin{bmatrix} \!
\left( \!1 \!+\! \dfrac{\beta}{\varrho}\cos\theta \! \right) \! \cos\phi & \!\!\!
\left(\! \dfrac{\beta}{\varrho} \! + \! \cos\theta \! \right) \! \sin\phi \!
\end{bmatrix} \!\!\!
\begin{bmatrix}
\bar{\bm{\vartheta}}\\\bar{\bm{\varphi}}
\end{bmatrix} \!\!,\!\! 
\end{equation}
where $\mathcal{V}$ is a normalization factor ensuring that $\|\bar{\bm{n}}_{m,n,k}^{(r)}\|=1$.

\subsection{Single-PA Two-User Analysis}\label{sec:IVA2}
We now consider a single PA serving two users simultaneously, exploiting the two orthogonal modes ($Q=2$). This scenario captures the essence of mode-division multiplexing in PAsS: the two ports share the same physical location but can radiate independent data streams using different modes. The optimization problem can be written as
\begin{subequations}
\begin{align}
\hspace{-2.2cm}\text{(P1b):}\hspace{3.5cm}& \hspace{-3.5cm}
\max_{ x^{(\text{PA})}, \{\!\xi_{q}^{(\text{PA})}\!\},\{\!\delta_{q}^{(\text{PA})}\!\},\bm{W},\,\{\bm{n}_k^{(r)}\}} 
R_1+R_2, \\
\hspace{-2.2cm}\text{s.t.}\hspace{3.8cm}
& \hspace{-1.3cm} \eqref{e:P1b},\eqref{e:P1e},\eqref{e:P1f},\eqref{e:P1g}.
\label{e:P2}
\end{align}
\end{subequations}

A key observation from the full-wave model is the high spatial selectivity of each mode's radiation pattern, as illustrated in Fig.~\ref{fig:energy_map}. The half-power beamwidth is only about $0.5$ m in the $x$-direction at a height of $3$ m, and the first sidelobe is suppressed by more than $10$ dB. This suggests that when the two users are sufficiently separated (e.g., more than $1$m apart), the inter-user interference caused by side lobes is negligible. We therefore adopt the following spatial separation assumption: the two users are located such that each mode's main lobe illuminates primarily its intended user, and the cross interference is small enough to be omitted. Under this assumption, 1) the optimal orientation for each port is simply to point it directly at its corresponding user, following Lemma \ref{thm:single_angle}. 2) the optimal user antenna polarization can be directly obtained via \eqref{eq:optimal_user_orientation_2D}, once the optimal PA position is determined.

Therefore, the remaining variables in (P1b) to be optimized are $x^{(\text{PA})}$ and $\bm{W}$. Analyzing the sum-rate expression for the two-user configuration shows that the objective is constrained by the following upper bound:
\begin{equation} \label{eq:sumrate_two_user_sharedPA}
\hspace{-0.2cm}R_{\mathrm{sum}}\bigl(x^{(\text{PA})}\bigr)
\!=\! \sum_{q=1}^2\log_2\!\left(1+\frac{P|w^{(p)}_q|^2\left|h_q\bigl(x^{(\text{PA})}\bigr)\right|^2}{ \sigma_q^2}\right) ,
\end{equation}
where $h_q(x^{(\mathrm{PA})})$ denotes the effective channel coefficient for the corresponding user, whose optimal PA location in a single-user configuration is represented by $x_q^*$.

Given the monotonicity of the objective function with respect to $|w^{(p)}_1|^2$ and $|w^{(p)}_2|^2$, the power constraint in \eqref{e:P1b} is necessarily satisfied with $|w^{(p)}_1|^2 + |w^{(p)}_2|^2 = 1$. Substituting this relation into \eqref{eq:sumrate_two_user_sharedPA} and applying the first-order optimality condition $\frac{\partial R_{\mathrm{sum}}\bigl( x^{(\text{PA})} \bigr)}{\partial |w^{(p)}_1|^2} = 0$ yields:
\begin{equation}
\hspace{-0.3cm} \frac{P|h_1( x^{(\text{PA})} )|^2}
{\sigma_1^2 \!+\!P|w^{(p)}_1|^2|h_1( x^{(\text{PA})} )|^2}
\!=\!
\frac{P|h_2( x^{(\text{PA})} )|^2}
{\sigma_2^2\!+\!P(1\!-\!|w^{(p)}_1|^2)|h_2( x^{(\text{PA})} )|^2}.\!\!\!\!\!\!
\label{eq:w1_stationary}
\end{equation}
Solving \eqref{eq:w1_stationary} yields
\begin{equation}
\begin{cases}
    |w^{(p)^*}_1|^2 = \frac{1}{2} +\frac{\sigma_2^2}{2P|h_2( x^{(\text{PA})} )|^2} -\frac{\sigma_1^2}{2P|h_1( x^{(\text{PA})} ) |^2}, \\
    |w^{(p)^*}_2|^2 = \frac{1}{2} +\frac{\sigma_1^2}{2P|h_1( x^{(\text{PA})} )|^2} -\frac{\sigma_2^2}{2P|h_2( x^{(\text{PA})} )|^2}.
    \end{cases}
\end{equation}

Substituting the optimal power allocations $|{w^{(p)}_1}^*|^2$ and $|{w^{(p)}_2}^*|^2$ back into \eqref{eq:sumrate_two_user_sharedPA} reduces the optimization problem to a single variable $x^{(\mathrm{PA})}$. To characterize the solution under the shared-location constraint within the multi-mode system, we perform a second-order Taylor expansion of each rate function $R_q(x^{(\mathrm{PA})})$ around its respective unconstrained local maximizer $x_q^*$, and \eqref{eq:sumrate_two_user_sharedPA} can be reformulated as:
\begin{eqnarray}\label{eq:quadratic_sumrate_sharedPA}
&&\hspace{-0.8cm}R_{\mathrm{sum}}(x^{(\text{PA})}) \approx \sum_{q=1}^{2} \bigg[ R_q(x_q^*)+ R_q^{\prime}(x_q^*) (x^{(\text{PA})} - x_q^*) \\
&&\hspace{3.5cm} +\frac{1}{2}R_q^{\prime\prime}(x_q^*)\bigl(x^{(\mathrm{PA})}-x_q^*\bigr)^2 \bigg].\notag
\end{eqnarray}
Subsequently, by applying $\frac{\partial R_{\mathrm{sum}}\bigl( x^{(\text{PA})} \bigr)}{\partial x^{(\text{PA})}} = 0$ to \eqref{eq:quadratic_sumrate_sharedPA}, we solve (P1b) as formally characterized in Lemma \ref{lem:optimal_position}.

\begin{thm}[Optimal Solution for Two Users]\label{lem:optimal_position}
In (P1b), the optimal PA position $x^*$ and the power allocation $w^{(p)}_1$ and $w^{(p)}_2$ are characterized by the following closed-form approximations: 
\begin{eqnarray}\label{eq:optimal_solution_2usr}
    &&\hspace{-1cm}x^* \approx\frac{R_1^{\prime\prime}(x_1^*)x_1^*+R_2^{\prime\prime}(x_2^*)x_2^*-(R_1^{\prime}(x_1^*)+R_2^{\prime}(x_2^*))}{R_1^{\prime\prime}(x_1^*)+R_2^{\prime\prime}(x_2^*)}, \\
     &&\hspace{-1cm}|{w^{(p)}_1}^*|^2 = \frac{1}{2} +\frac{\sigma_2^2}{2P|h_2( x^* )|^2} -\frac{\sigma_1^2}{2P|h_1( x^* ) |^2}, \\
     &&\hspace{-1cm}|{w^{(p)}_2}^*|^2 =\frac{1}{2} +\frac{\sigma_1^2}{2P|h_1( x^* )|^2} -\frac{\sigma_2^2}{2P|h_2( x^* )|^2},
\end{eqnarray}
where 
\begin{eqnarray*}
\hspace{-0.1cm}R_q^\prime(x_q^*)&\hspace{-0.3cm}=&\hspace{-0.3cm}
-\frac{\sigma_2^2}{2\ln 2}\,\frac{\dfrac{1}{|h_{q^{\prime}}(x_q^*)|^2}
\left.\dfrac{\partial \ln |h_{q^{\prime}}(x^{(\text{PA})})|^2}{\partial x^{(\text{PA})}}\right|_{x^{(\text{PA})}=x_q^*}}
{\dfrac{\sigma_1^2}{|h_q(x_q^*)|^2}+P|w^{(p)}_q(x_q^*)|^2},\\
\hspace{-0.1cm}R_q^{\prime\prime}(x_q^*) &\hspace{-0.3cm} \approx &\hspace{-0.3cm} - \ln 2 \! \left(\! R_q^{\prime}(x_q^*) \! \right)^{\!2} \!-\! R_q^{\prime}(x_q^*) \left.  \!\!\frac{\partial \! \ln \! |h_{q^{\prime}}(x^{(\text{PA})}\!) \! |^{2}}{\partial x^{(\text{PA})}} \!\right|_{ x^{(\text{PA})}=x_q^*},
\end{eqnarray*}
where $q$ and $q^{\prime}$ denote mode 1 and 2, respectively, reflecting the alternating nature between these two operational modes.
\end{thm}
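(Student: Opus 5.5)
The argument reduces (P1b) to a one-dimensional problem in $x^{(\text{PA})}$ and then solves a quadratic surrogate. First I would fix the non-position variables using the spatial-separation assumption. Each port $q$ is pointed at its own user via Lemma~\ref{thm:single_angle}, and each user's polarization is matched via \eqref{eq:optimal_user_orientation_2D}. The cross terms $|\bm h_k\bm w_i|^2$, $i\neq k$, are dropped, so the sum rate decouples into the two interference-free terms of \eqref{eq:sumrate_two_user_sharedPA}. Each effective gain $|h_q(x^{(\text{PA})})|^2$ then depends on $x^{(\text{PA})}$ only through the terms collected in \eqref{e:ln_H_2}. Its logarithmic derivative is the bracket in \eqref{e:partial_ln_H_2}, evaluated with the coordinates of user $q$. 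By Lemma~\ref{thm:optimal_x}, this derivative vanishes at the single-user optimum $x_q^*$.

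Second, I would handle the power split for fixed $x^{(\text{PA})}$. The objective is increasing in $|w^{(p)}_q|^2$, so the constraint \eqref{e:P1b} is active, and I substitute $|w^{(p)}_2|^2=1-|w^{(p)}_1|^2$. Each summand is concave in the power fraction, so the stationarity condition \eqref{eq:w1_stationary} is sufficient. It is linear in $|w^{(p)}_1|^2$ after cross-multiplication, and solving it gives the stated water-filling-type expressions (valid when both lie in $[0,1]$). Evaluating them at $x^*$ gives the second and third formulas of the theorem.

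Third, with the optimal powers substituted, $R_q$ becomes a function of $x^{(\text{PA})}$ alone. It depends both directly through $h_q$ and indirectly through $w^{(p)}_q(x^{(\text{PA})})$, which involves $h_{q'}$. I would expand each $R_q$ to second order about $x_q^*$, as in \eqref{eq:quadratic_sumrate_sharedPA}. Setting the derivative of the resulting quadratic to zero gives
\begin{equation*}
\textstyle\sum_q\bigl[R_q'(x_q^*)+R_q''(x_q^*)(x-x_q^*)\bigr]=0,
\end{equation*}
which rearranges directly into the expression for $x^*$. This is a maximum because each $R_q''(x_q^*)<0$ near a local peak.

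The main obstacle is computing $R_q'$ and $R_q''$ at $x_q^*$. Differentiate $R_q=\log_2\bigl(1+P|w_q|^2|h_q|^2/\sigma_q^2\bigr)$ with $|w_q|^2$ given by the closed form. The term in $\partial\ln|h_q|^2/\partial x$ vanishes at $x_q^*$ by Lemma~\ref{thm:optimal_x}. Only the dependence through $\sigma_{q'}^2/(P|h_{q'}|^2)$ inside $|w_q|^2$ survives. Its derivative is $-\frac{\sigma_{q'}^2}{P|h_{q'}|^2}\,\partial\ln|h_{q'}|^2/\partial x$, and this yields the stated $R_q'$ after dividing by $\sigma_q^2/|h_q|^2+P|w_q|^2$.

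For $R_q''$, I would differentiate $R_q'$ once more. I would keep the square of the log-derivative, which gives the $-\ln 2\,(R_q')^2$ term, and the term from differentiating the factor $1/|h_{q'}|^2$, which gives $-R_q'\,\partial\ln|h_{q'}|^2/\partial x$. I would neglect the second derivatives of $\ln|h_{q'}|^2$, which are small because the attenuation and spreading terms vary slowly. This neglect is the approximation behind the "$\approx$" in the statement. I would check carefully that the cross-dependence sign conventions match the alternating $q,q'$ labeling.
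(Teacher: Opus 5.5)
Your overall route is the paper's. You decouple the ports under the spatial-separation assumption, solve the power split (your concavity remark is a small improvement), expand each $R_q$ to second order about $x_q^*$, and get $R_q'(x_q^*)$ from the fact that only the $h_{q'}$-dependence of $|w^{(p)}_q|^2$ survives there.

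\textbf{The gap: the second derivative $R_q''$.} You cannot obtain it by differentiating the simplified formula for $R_q'$, because $\partial_x|h_q|^2=0$ holds only at the point $x_q^*$. The general $R_q'(x)$ contains terms proportional to $\partial_x|h_q|^2$, and their derivatives at $x_q^*$ are nonzero. Write $\partial_x\equiv\partial/\partial x^{(\text{PA})}$ and $|w_q|^2\equiv|w^{(p)}_q|^2$, and evaluate everything at $x_q^*$. The exact second derivative is then
\begin{multline*}
R_q''=\frac{P\,\partial_x^2|h_q|^2\bigl(|w_q|^2+\frac{\sigma_q^2}{2P|h_q|^2}\bigr)}{\ln 2\,\bigl(\sigma_q^2+P|w_q|^2|h_q|^2\bigr)}\\
+R_q'\,\frac{\partial_x^2\ln|h_{q'}|^2}{\partial_x\ln|h_{q'}|^2}\\
-R_q'\,\partial_x\ln|h_{q'}|^2-\ln 2\,(R_q')^2 .
\end{multline*}
Your neglect of $\partial_x^2\ln|h_{q'}|^2$ removes exactly the second term. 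For the cross-mode part this is the right move: it is what reproduces the stated coefficient. Applied to the exact expression, the paper's step (c), which drops $\partial_x^2|h_{q'}|^2/\partial_x|h_{q'}|^2$, would instead leave $-2R_q'\,\partial_x\ln|h_{q'}|^2$.

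The first term, user $q$'s own curvature, is untouched by your approximation. It is strictly negative at a nondegenerate maximizer of $|h_q|^2$, and it is not small in general. At high SNR, where $|w_q|^2\approx\frac12$, it is about $\partial_x^2\ln|h_q|^2/\ln 2$, whereas the retained terms are only $O\bigl(\sigma_{q'}^2/(P|h_{q'}|^2)\bigr)$. So the log-curvature neglect is not the only approximation behind the $\approx$.

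What is missing is the weak-coupling regime the paper invokes in steps (a)--(b) of Appendix~\ref{sec:AppD}:
\begin{itemize}
\item $|h_{q'}(x_q^*)|^2\ll 1$;
\item $\frac{\sigma_{q'}^2|h_q|^2}{2P|h_{q'}|^4}$ dominates both $\frac{\sigma_q^2}{2P|h_q|^2}$ and $|w_q|^2$;
\item the second derivatives of $|h_q|^2$ and $|h_{q'}|^2$ are of comparable size.
\end{itemize}
You need this, or some other explicit assumption that makes the own-curvature term negligible.

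\textbf{A second problem: the maximality claim.} You argue the stationary point is a maximum because each $R_q''(x_q^*)<0$ ``near a local peak.'' This fails. The point $x_q^*$ maximizes $|h_q|^2$, not $R_q$; the stated $R_q'(x_q^*)$ is nonzero. Moreover, the retained curvature is in fact nonnegative. Set $A=\frac{\sigma_{q'}^2}{2P|h_{q'}|^2}$ and $B=\frac{\sigma_q^2}{2P|h_q|^2}$, and insert the interior split $|w_q|^2=\frac12+A-B$. The stated approximation then equals $-R_q'\,\partial_x\ln|h_{q'}|^2\cdot\frac{B+1/2}{A+B+1/2}$. Here $-R_q'\,\partial_x\ln|h_{q'}|^2\ge 0$ directly from the formula for $R_q'$.

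Hence these curvatures cannot certify a maximum; any concavity must come from the discarded terms. The paper only asserts that $x^*$ is the stationary point of the surrogate \eqref{eq:quadratic_sumrate_sharedPA}, and your proof should claim no more.
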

\begin{proof}
(sketch)
Setting the first derivative $\partial R_{\mathrm{sum}}\bigl( x^{(\mathrm{PA})} \bigr) / \partial x^{(\mathrm{PA})} = 0$, we obtain $x^*$. Since the exact second-order behavior is heavily entangled by position-dependent power coupling, we extract the dominant curvature terms under a weak-coupling regime, characterized by 
$|h_{q^\prime}(x_q^*)|^2 \ll 1$ and $\frac{\sigma_{q'}^2 |h_q(x_q^*)|^2}{2P|h_{q'}(x_q^*)|^4}
\gg \frac{\sigma_q^2}{2P|h_q(x_q^*)|^2}.$
to yield a tractable approximation. The detailed derivations of the exact and approximated local derivatives are deferred to Appendix \ref{sec:AppD}.
\end{proof}

The closed-form result reveals that the shared PA location is a compromise between the two users' individual optima, weighted by the curvature of their rate functions. This compromise becomes necessary because the physical position of the PA is common to both ports.

\subsection{General Multi-PA Multi-User Optimization}\label{sec:IVC}

Building upon the insights from the single-PA analysis, we now tackle the general problem (P1) for systems with multiple waveguides and PAs. The key observation from the full-wave model is that each guided mode radiates a spatially selective beam. Therefore, a single PA equipped with $Q$ modes can simultaneously serve up to $Q$ users that are sufficiently separated in space, with each mode's main lobe illuminating its intended user. This motivates us to partition the user set into groups of size $Q$, where each group will be served by one dedicated PA. The optimal orientation of each PA port then follows directly from Lemma \ref{thm:single_angle}, and once a PA is paired with a specific user group, the optimal PA position can be obtained from Theorem \ref{lem:optimal_position}.

Consequently, for the general problem (P1) the remaining tasks are:
\begin{itemize}[leftmargin=0.5cm]
    \item User grouping: partition the $K$ users into $J=K/Q$ disjoint groups, each containing $Q$ users.
    \item PA assignment: assign each of the $MN$ PAs to one of these user groups.
    \item Precoding optimization: design the digital precoder that allocates power among modes and PAs.
\end{itemize}
We define $\mathcal{I}=\{1,\ldots,MN\}$ as the PA index set and $\mathcal{K}=\{1,\ldots,K\}$ as the user index set. Let $\mathcal{P}_j = \{j_1, \dots, j_Q\}$ denote the $j$-th $Q$-user group, where $j_q $ represents the $q$-th user in this $j$-th group, and $J = K/Q$ is the total number of groups.
To characterize the PA-to-group topology, we introduce a binary assignment matrix $\bm{X} \in \{0,1\}^{MN \times J}$. $X_{i,j}=1$ indicates that the $i$-th PA is assigned to serve group $\mathcal{P}_j$, and $X_{i,j}=0$ otherwise.

\subsubsection{User grouping}
Because each PA serves its assigned group via spatially separated beams, users within the same group should be physically close to each other; otherwise, the required beam separation forces a larger propagation distance and reduces the channel gain. Hence we use the sum of pairwise squared Euclidean distances as the grouping metric:
\begin{equation}\label{eq:leftover_pairing}
\arg\min_{\{\mathcal P_j \}}
\!\sum_{\mathcal P_j =\{j_q\}}
\!\!\Big(\left|x_k^{(\text{UR})}\!-\!x_{k'}^{(\text{UR})}\right|^2\!+\!\left|y_k^{(\text{UR})}\!-\!y_{k'}^{(\text{UR})}\right|^2\Big).
\end{equation}
Exhaustive search over all $\frac{K!}{(Q!)^{K/Q} (K/Q)!}$ groupings is prohibitive for large $K$. We therefore propose a geometry-guided pairing strategy that leverages the physical topology of the waveguide array. Specifically, since the waveguides are displaced along the $y$-axis, a user's $y$-coordinate naturally determines its primary service waveguide. Each user is therefore initially associated with its proximate waveguide. Within these waveguide-specific clusters, users are sorted by their $x$-coordinates, and adjacent users are preferentially paired.

In cases where a cluster size is odd, any unmatched users are assigned to a residual set and paired according to \eqref{eq:leftover_pairing}. This approach remains computationally tractable, as the cardinality of the residual set is sufficiently small to permit an exhaustive search. This localized sorting rule effectively clusters users for joint service by the same PA. By pruning the search space, the strategy ensures that the final grouping is determined via a low-complexity exhaustive search over a restricted candidate set, thereby maintaining overall computational efficiency.

\subsubsection{PA assignment}
After user grouping, we proceed to the matching stage between PAs and user groups, which is formulated as:
\begin{subequations}\label{eq:P3_joint_new}
\begin{align}
\arg\max_{\bm X}\quad &
\sum_{i\in\mathcal I}\sum_{j=1}^{J} X_{i,j}R^{(2)}_{i,j}(x^{(\text{PA})}) \label{eq:P3_joint_new_obj}\\
\text{s.t.}\quad
& \sum_{j=1}^{J} X_{i,j} = 1,\label{eq:P3_joint_new_row}\\
& \sum_{i\in \mathcal{I}} X_{i,j} \geq 1 \quad \text{or}  \quad MN < J,\label{e:P1c_f}
\end{align}
\end{subequations}
where \eqref{eq:P3_joint_new_row} guarantees full antenna utilization, while \eqref{e:P1c_f} ensures each group is allocated at least one antenna. 
Here, $R^{(2)}_{i,j}$ is the maximum achievable sum-rate when PA $i$ serves group $j$. It generalizes the two-user sum-rate in \eqref{eq:quadratic_sumrate_sharedPA} by including inter-user interference from other PAs:
\begin{equation}\label{eq:R2}
\hspace{-0.2cm}
R_{i,j}^{(2)}(x^{(\text{PA})})\!=\!\sum_{q=1}^{2}\log_2\!\left(\!1\!+\!\frac{P|w^{(p)}_{i,j_q}|^2\left|h_{i,j_q}\bigl(x^{(\mathrm{PA})}\bigr)\right|^2}{\sigma_{j_q}^2 \!+ \!{\mathcal N}_{i,j_q}}\!\right),
\end{equation}
where
\begin{equation*}
{{\mathcal N}}_{i,j_q}
=P\!\!\!\!\!\!\!\sum_{\substack{(i',j')\neq(i,j)}}
\!\!\!\!\!\!\!X_{i',j'}\,|w^{(p)}_{i',j_q}|^2
\left|h_{i',j_q}\bigl(x_{i',j'}^{(\mathrm{PA})}\bigr)\right|^2 
\end{equation*}
is inter-user interference, and $|w^{(p)}_{i,j_{q}}|^2$ denotes the local split coefficient from this PA to the $q$-th user in pair $j$.  The optimal sum-rate $R_{i,j}^{(2)}(x^{(\text{PA})})$ is obtained by substituting $\sigma_q^2$ in \eqref{eq:sumrate_two_user_sharedPA} with the effective noise-plus-interference term $\sigma_{j_q}^2 + {\mathcal{N}_{i,j_q}}$. 

This formulation constitutes a deformed rectangular linear assignment problem, which can be solved using the Hungarian algorithm. In cases where $MN < J$, the problem is addressed by introducing $J - MN$ dummy antennas with zero sum-rate contributions ($R^{(2)}_{\tilde{i},j} = 0$). Conversely, when $MN > J$, $MN - J$ dummy user groups with $R^{(2)}_{i,\tilde{j}} = 0$ are incorporated to facilitate the assignment. Then, the Hungarian algorithm is employed to uniquely assign one PA to each user group. 

The remaining unassigned PAs, denoted by the set $\tilde{\mathcal{I}}$, are subsequently allocated via a greedy heuristic based on their marginal contributions to the objective function in \eqref{eq:P3_joint_new_obj}. Specifically, for an unassigned PA $i_l \in \tilde{\mathcal{I}}$ and the user group $j \in \{1, \dots, J\}$, let $\tilde{\bm X}^{(t)} = \bm X^{(t)} + \bm {\mathcal{X}}_{i_l,j}$ denote the tentative assignment matrix, where $\bm X^{(0)}$ represents the initial assignment matrix after applying the Hungarian algorithm and $\bm {\mathcal{X}}_{i_l,j}$ is an $MN \times J$ single-entry matrix with $1$ at the $(i_l,j)$-th position and $0$ elsewhere. The marginal gain is defined as the exact increment of the objective value:
\begin{equation}\label{eq:marginal_gain_compact}
\Delta_{i,j}(\tilde{\bm X}^{(t)}) \triangleq \sum_{i\in\mathcal I}\sum_{j=1}^{J} \tilde{X}_{i,j} \tilde{R}^{(2)}_{i,j} - \sum_{i\in\mathcal I}\sum_{j=1}^{J} X_{i,j}R^{(2)}_{i,j},
\end{equation}
where $\tilde{R}^{(2)}_{i,j}$ is the achievable rate evaluated under the tentative assignment $\tilde{\bm X}^{(t)}$. In each iteration, the pair yielding the maximal marginal gain is selected for the next assignment. Then, 
the assignment state is then updated as $\bm X^{(t)} \leftarrow \bm X^{(t+1)} + \bm {\mathcal{X}}_{i_l^*, j^*}$, and the allocated antenna $i_l$ is removed from the available set $\tilde{\mathcal{I}}$. This procedure repeats iteratively until all PAs are allocated.

Once the assignment is fixed, the optimal PA position for each group follows directly from Theorem \ref{lem:optimal_position}, and the optimal orientations of the PA ports and user antennas are given by Lemma \ref{thm:single_angle} and \eqref{eq:optimal_user_orientation_2D}, respectively.

\subsubsection{Precoding optimization}\label{sec:fp_mo_ao}
With spatial deployment fixed, we now optimize the digital precoder. 
The power allocation coefficients $w^{(p)}_{i,j_1}$ and $w^{(p)}_{i,j_2}$ from Section \ref{sec:IVA2} are arranged into a sparse precoding matrix $\bm{W}^{(p)} \in \mathbb{C}^{MN \times K}$. This matrix is structured such that $[\bm{W}^{(p)}]_{i, 2j-1} = w^{(p)}_{i,j_1}$ and $[\bm{W}^{(p)}]_{i, 2j} = w^{(p)}_{i,j_2}$ if the $i$-th PA is assigned to the $j$-th user pair, while the remaining entries in the $i$-th row are zero. Consequently, each row of $\bm{\mathcal{F}}$ contains exactly two non-zero entries, characterizing the localized power distribution of the $i$-th PA between the two users within its assigned group.

We further introduce a matrix $\bm{G} = [\bm{g}_1, \ldots, \bm{g}_{MN}] \in \mathbb{C}^{QM \times MN}$ to characterize the power allocation from the access point to each antenna element. The $i$-th column $\bm{g}_i$ specifies the mode-domain excitation allocated to the $i$-th power amplifier (PA). Then, the global system precoder $\bm{W}$ as defined in \eqref{e:s} can be factorized as $\bm{W} = \bm{G}\bm{W}^{(p)}$ or $\bm w_k=\bm G \bm w^{(p)}_k$, where $\bm w^{(p)}_k$ is the $k$-th column of $\bm{W}^{(p)}$. 

Therefore, the non-convex sum-rate maximization problem can be addressed via an iterative framework that integrates FP. By applying the quadratic transform, an equivalent FP reformulation is derived as follows:
\begin{subequations}\label{e:optimize_G}
\begin{align}
&\hspace{-1cm} \max_{\bm{G},\bm{c}_1,\bm{c}_2}  \mathcal{L} = \sum_{k=1}^{K} (1+c_{1,k}) \Bigg[ 2\Re\big\{c_{2,k}\sqrt{P}\,\bm{h}_k \bm{G}\bm w^{(p)}_k\big\} \label{eq:fp_obj_revised} \\
&\hspace{-1cm}  - |c_{2,k}|^2 \! \bigg( \! \sigma^2 \! + \! P \!\sum_{k'=1}^{K} \! \big|\bm{h}_k \bm{G}\bm w^{(p)}_{k^\prime}\big|^2 \! \bigg) \! \Bigg] \! + \! \sum_{k=1}^{K} \bigl( \log(1 \! + \! c_{1,k}) \! - \! c_{1,k} \bigr)  \notag  \\
\text{s.t.} \quad & c_{2,k} \in \mathbb{C}, \quad c_{1,k} > 0,  \label{eq:fp_aux_constraints} \\
& \operatorname{tr}\bigl(\bm{G}\bm{W}^{(p)}(\bm{W}^{(p)})^{\mathsf H}\bm{G}^{\mathsf H}\bigr) \le 1. \label{eq:fp_phys_constraints}
\end{align}
\end{subequations}
where $\Re\{\cdot\}$ denotes the real part of a complex value, while $\bm c_1=[c_{1,1},\ldots,c_{1,K}]$ and $\bm c_2=[c_{2,1},\ldots,c_{2,K}]$ represent the auxiliary variables. For a given $\bm G$, the optimal values for these auxiliary variables admit the following closed-form updates:
\begin{align}
c_{1,k}^*
&=\frac{P\big|\bm h_k\bm G\bm w^{(p)}_k\big|^2}
{\sigma^2+P\sum_{k'\neq k}\big|\bm h_k\bm G\bm w^{(p)}_{k^\prime}\big|^2},\label{eq:u_update_revised}\\
c_{2,k}^*
&=\frac{\sqrt{P}\,\bm h_k\bm G\bm w^{(p)}_k}
{\sigma^2+P\sum_{k'=1}^{K}\big|\bm h_k\bm G\bm w^{(p)}_{k^\prime}\big|^2}.
\label{eq:v_update_revised}
\end{align}
Under the generalized power constraint, \eqref{e:optimize_G} reduces to a convex quadratic program. Then the optimal precoding matrix $\bm G$ is then characterized by the Karush-Kuhn-Tucker optimality condition: 
\begin{equation*}
\begin{split}
&\left( \sum_{k=1}^{K} \mu_k\,\bm h_k^{\mathsf{H}}\bm h_k + \chi\bm I \right) \bm G \bigl[ \bm{W}^{(p)} (\bm{W}^{(p)})^{\mathsf{H}} \bigr] \\
&\quad = \sqrt{P} \sum_{k=1}^{K} (1 + c_{1,k}) c_{2,k}^{*} \bm h_k^{\mathsf{H}} (\bm w^{(p)}_k)^{\mathsf{H}},
\end{split}
\end{equation*}
where $\mu_k \triangleq P(1+c_{1,k})|c_{2,k}|^2$, and $\chi$ denotes the Lagrange multiplier associated with the generalized power constraint. If $\bm{W}^{(p)} (\bm{W}^{(p)})^ {\mathsf{H}}$ is rank-deficient, the Moore-Penrose pseudoinverse $[\bm{W}^{(p)} (\bm{W}^{(p)})^ {\mathsf{H}}]^\dagger$ is used to ensure a stable solution. In practice, $\chi$ is obtained via bisection over the monotonic function $\operatorname{tr}\!\bigl(\bm G(\chi)\bm{W}^{(p)}(\bm{W}^{(p)})^{\mathsf{H}}\bm G(\chi)^{\mathsf{H}}\bigr)$ to satisfy the power constraint.

By embedding the FP framework into the structured precoder factorization $\bm W=\bm G \bm{W}^{(p)}$, the proposed method strictly preserves the sparse user grouping and local power-sharing patterns encoded in $\bm{W}^{(p)}$. This structure-guided formulation confines the iterative optimization exclusively to $\bm G$. Consequently, unlike conventional full-dimensional designs, it enables closed-form auxiliary updates and an efficient KKT-based solution for $\bm G$, significantly reducing computational complexity while aligning with the PASS architecture.

\section{Numerical Results}\label{sec:V}
This section evaluates the performance of the proposed multi-mode PASS framework through numerical simulations. These results validate the preceding analytical derivations and demonstrate the system's efficacy in terms of polarization characteristics and sum-rate optimization. 
Unless otherwise specified, simulations are conducted at a carrier frequency of $100$ GHz. The rectangular dielectric waveguides have cross-section $a \times b = 3 \times 2$ mm$^2$; the core refractive index is $2.0$. 
The waveguide and atmospheric attenuation coefficients are set to $\alpha_{\text{W}} = 0.08$ dB/m and $\alpha_{\text{A}} = 0.05$ dB/m, respectively. 
The total transmit power is $P = 10$ W, with its logarithmic equivalent $P_{\mathrm{dBW}} \triangleq 10 \log_{10}(P)$. 
Noise power is $\sigma^2 = -26$ dBW, and the minimum inter-PA spacing is half wavelength.

\subsection{Single-PA Performance}
We begin with a single PA serving one or two users to validate the closed-form results and the fundamental trade-offs derived in Section \ref{sec:IV}.

\subsubsection{Polarization matching}
Fig.~\ref{f:polarization_map} shows the channel gain $|H|^2$ as a function of PA orientation (Fig.~\ref{f:polar_tx_rotate}) and user antenna polarization (Fig.~\ref{f:polar_rx_rotate}) for a single user at $(5.5,0,0)$ and a PA at $(5,0,3)$. In Fig.~\ref{f:polar_tx_rotate}, the receiver polarization is fixed to vertical, and the PA orientation is swept. The gain peaks exactly at the orientation predicted by Lemma~\ref{thm:single_angle}, confirming the analytical result. 
The maximum gain is only $0.02$ because the receiver polarization is not matched. In Fig.~\ref{f:polar_rx_rotate}, the PA is optimally pointed at the user, and the receiver polarization is varied. Now the gain reaches approximately $0.6$ when the receive Jones vector aligns with the incident field, as given by \eqref{eq:optimal_user_orientation_2D}. Outside this matched condition, the gain drops sharply, underscoring the importance of polarization alignment.

\begin{figure}[!tb]
\centering
    \begin{subfigure}[!tb]{0.49\linewidth}
        \centering
        \includegraphics[width=1\linewidth]{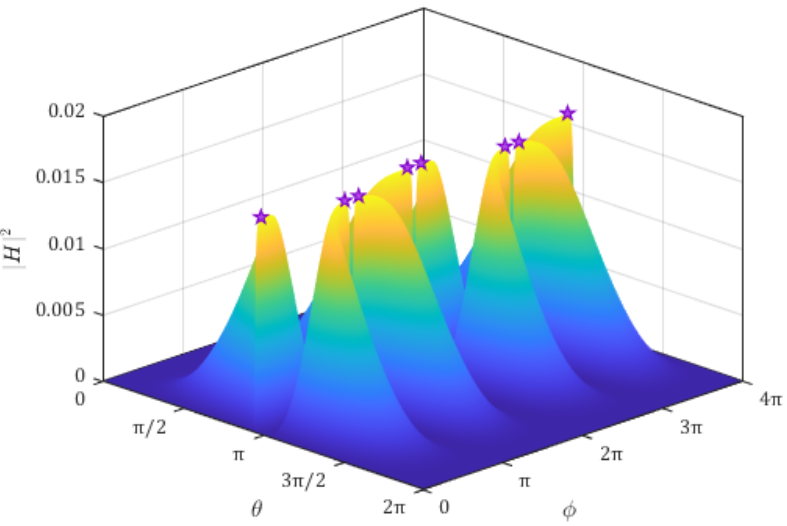}
        \caption{}\label{f:polar_tx_rotate}
    \end{subfigure}
    \begin{subfigure}[!tb]{0.49\linewidth}
        \centering
        \includegraphics[width=1\linewidth]{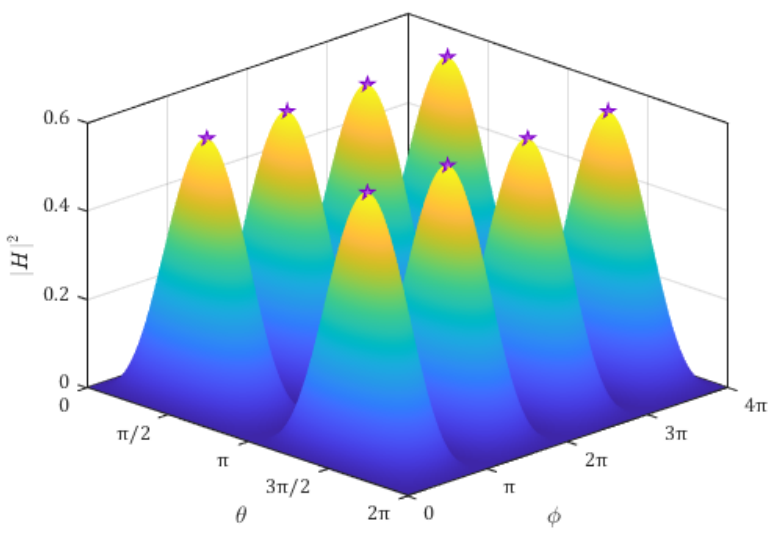}
        \caption{}\label{f:polar_rx_rotate}
    \end{subfigure}
    
\caption{Single-port single-user channel gain versus transmit and receive orientations: (a) PA rotation with fixed receiver orientation; (b) receiver rotation with PA fixed at its optimal orientation. The optimal PA orientation derived in Lemma \ref{thm:single_angle} and the corresponding user polarization orientation derived in \eqref{eq:optimal_user_orientation_2D} are also labeled in the figure. }
\label{f:polarization_map}
\end{figure}

\subsubsection{Single PA serving two users}
Fig.~\ref{fig:single_curve} plots the sum rate versus PA position $x^{(\text{PA})}$ for two user pairs: a narrow pair at $(4.5,0,0)$ and $(5.5,0,0)$, and a wide pair at $(3,0,0)$ and $(7,0,0)$. Both configurations are symmetric about $x=5$ m. The sum rate increases as the PA moves toward the users, but after passing the midpoint it decreases because the wireless path loss grows and the waveguide attenuation continues to increase. The optimal shared position lies near the midpoint, with a slight feed-side bias due to waveguide loss. The narrow pair achieves a higher peak rate because both users can be kept close simultaneously, whereas the wide pair forces a larger average distance. Multi-mode transmission (two modes active) consistently delivers about $1.5$ times the rate of single-mode time-division access (TDMA). The gain is less than the theoretical factor $2$ because the PA must compromise between the two users' optimal positions.

\begin{figure}[!tb]
    \centering
    \includegraphics[width=0.8\linewidth]{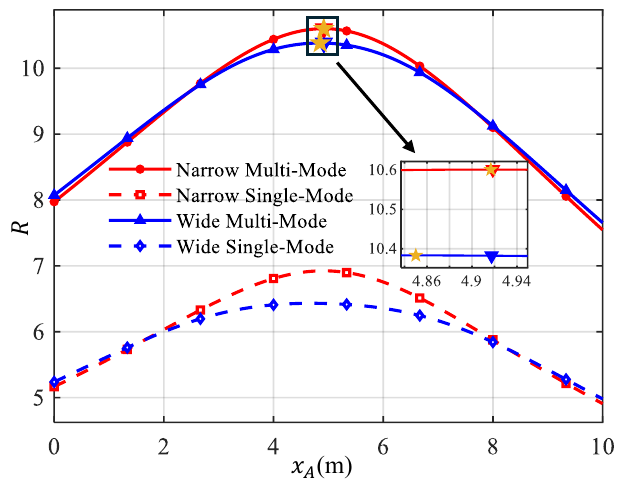}
    \caption{Sum-rate versus antenna position $x_A$ for narrow-spacing and wide-spacing user groups in a single-PA setup. }
    \label{fig:single_curve}
\end{figure}

\subsubsection{Outage probability}
Fig.~\ref{fig:outage} shows the outage probability (defined as the event that either user's rate falls below $1$ bps/Hz) versus transmit power for different atmospheric absorption coefficients $\alpha_{\text{A}}$. The multi-mode scheme is compared with a single-mode TDMA baseline. Higher $\alpha_{\text{A}}$ shifts the curves to the right, requiring more power to achieve a given outage. Nevertheless, multi-mode transmission provides a substantial power saving: at $10^{-4}$ outage, it needs $1.9$ dB less power than the single-mode baseline (both with $\alpha_{\text{A}}=0.05$ dB/m). This diversity gain compensates even for a threefold increase in atmospheric absorption, highlighting the robustness of mode-division multiplexing.

\begin{figure}
    \centering
    \includegraphics[width=0.8\linewidth]{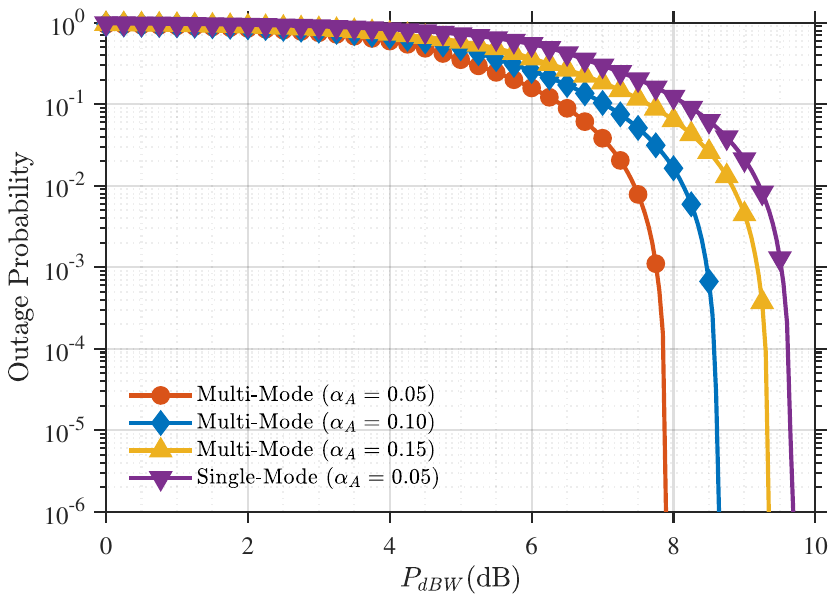}
    \caption{Outage probability versus transmit power $P_{dBW}$ under different atmospheric absorption coefficients $\alpha_A$.}
    \label{fig:outage}
\end{figure}

\begin{figure}
    \centering
    \includegraphics[width=0.8\linewidth]{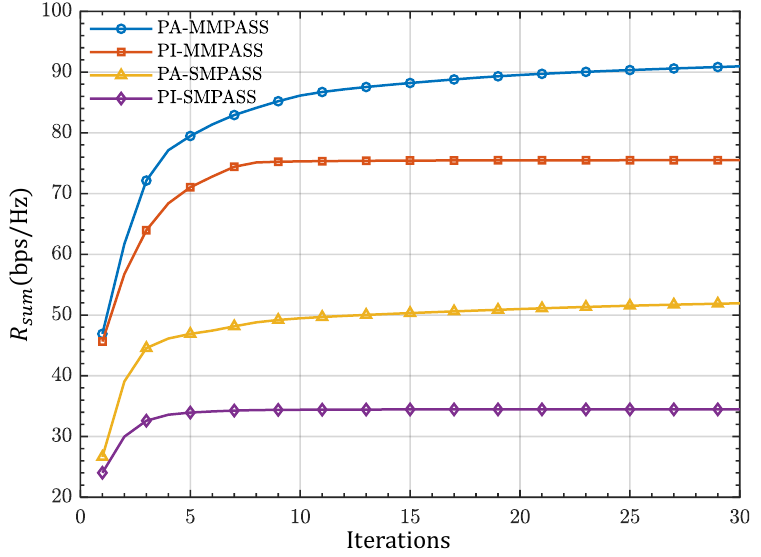}
    \caption{Convergence behavior of the proposed joint optimization framework for different system configurations.}
    \label{fig:MM_SM_PA_PI}
\end{figure}

\subsection{MMPASS Performance}

We now evaluate the performance of the general MMPASS framework by incorporating its complete electromagnetic characteristics. This analysis reveals two synergistic mechanisms that enhance the sum-rate: 1) spatial multiplexing gains inherent in multi-mode transmission, and 2) optimal polarization alignment to minimize propagation loss. To validate these joint benefits, we investigate the following system configurations:
\begin{itemize}[leftmargin = 0.5cm]
    \item \textbf{PI-SMPASS:} Polarization-Ignorant Single-mode PASS;

    \item \textbf{PA-SMPASS:} Polarization-Aware Single-mode PASS;

    \item \textbf{PI-MMPASS:}Polarization-Ignorant Multi-mode PASS

    \item \textbf{PA-MMPASS:} Polarization-Aware Multi-mode PASS.

     \item \textbf{DP-MMPASS:} Discrete-Polarization Multi-mode PASS. This configuration achieves partial polarization matching by selecting the receiver's polarization state from a predefined discrete set. Concurrently, the PA is assumed to be perfectly aligned with the optimal radiation direction.
\end{itemize}

\subsubsection{Convergence behavior}
Fig.~\ref{fig:MM_SM_PA_PI} shows the sum rate versus iteration count for the joint FP and polarization update algorithm. PA-MMPASS converges to about $91$ bps/Hz, while PI-MMPASS, PA-SMPASS, and PI-SMPASS reach $74$, $51$, and $34$ bps/Hz, respectively. The gains decompose as follows: polarization awareness alone improves rate by $23\%$ (from $74$ to $92$) in multi-mode and by $50\%$ (from $34$ to $51$) in single-mode; multi-mode alone improves rate by $78\%$ (from $51$ to $91$) under polarization awareness. 
Overall, the MMPASS framework achieves up to a $167\%$ increase in spectral efficiency compared with single-mode PASS.

\subsubsection{Sum rate vs. transmit power}
Fig. \ref{fig:r_t} presents the achievable sum-rate versus transmit power with the proposed three-stage optimization framework. We have the following main observations.

\begin{figure}
    \centering
    \includegraphics[width=0.85\linewidth]{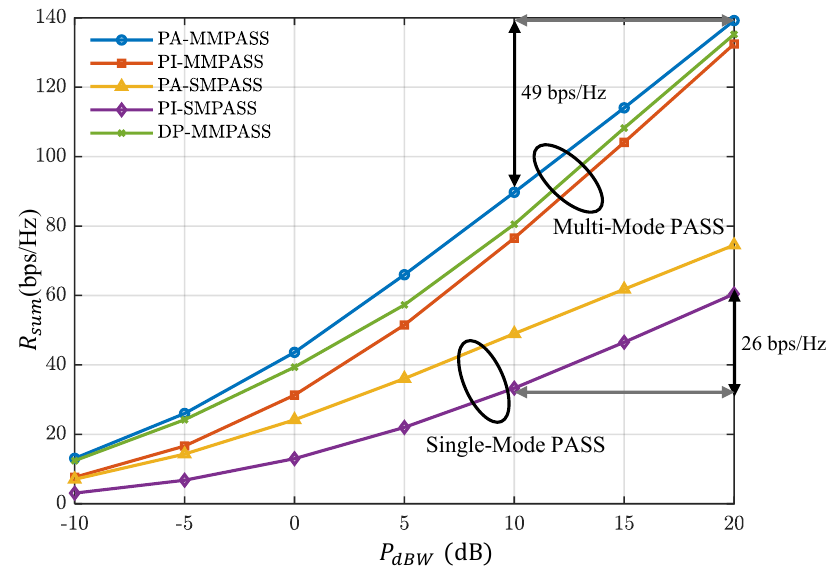}
    \caption{Achievable sum-rate $R_{sum}$ versus transmit power $P_{dBW}$ for the different PASS schemes.}
    \label{fig:r_t}
\end{figure}

\begin{itemize}[leftmargin= 0.4cm]
    \item All multi-mode schemes consistently outperform their single-mode counterparts across the entire power range. For instance, at a transmit power of $20$ dBW, PA-MMPASS achieves approximately $139$ bps/Hz, whereas PA-SMPASS and PI-SMPASS yield only $74$ and $60$ bps/Hz, respectively.
    \item The slope in the high-SNR regime is about $49$ bps/Hz per $10$ dB for multi-mode, nearly double the $26$ bps/Hz per $10$ dB for single-mode, confirming the higher multiplexing gain. Within the multi-mode category, PA-MMPASS consistently provides the superior performance compared with PI-MMPASS. This indicates that polarization matching yields non-negligible gains even after the topology is fixed by Algorithm 1.
    \item Within each respective transmission mode, Polarization-Aware configurations consistently outperform their Polarization-Ignorant counterparts. Specifically, at $P = 20$ dBW, PA-MMPASS provides a gain of approximately $7$ bps/Hz over PI-MMPASS, while the performance gap between PA-SMPASS and PI-SMPASS reaches $14$ bps/Hz. These results indicate that polarization matching yields non-negligible spectral efficiency improvements even after the physical system topology has been fixed.
   \item Furthermore, the Jones vector of user for PP-MMPASS scheme $\bar{\bm n}^{(r)}=(\cos \bar{\phi}^{(r)}, \sin \bar{\phi}^{(r)})$, utilizing a discrete angular codebook constructed by uniformly sampling  $\bar{\phi}^{(r)} \in \left\{0,\frac{\pi}{9},\frac{2\pi}{9},\ldots,\frac{17\pi}{9}\right\}$, closely tracks the performance of PA-MMPASS. At $P = 20$~dBW, the spectral efficiency gap between the two is a mere $4$~bits/s/Hz. This suggests that discrete polarization control can capture the majority of the gains from continuous optimization while maintaining lower implementation complexity.
\end{itemize}

\subsubsection{Scalability of MMPASS}
Fig.~\ref{WG_Sumrate} shows the sum rate as a function of the number of waveguides $M$ and the number of PAs per waveguide $N$, with $K=24$ users. The sum rate increases monotonically with both $M$ and $N$ for all schemes, as more waveguides provide parallel spatial channels and more PAs offer finer beamforming control. The multi-mode surfaces have a steeper gradient than the single-mode ones, indicating that MMPASS better exploits additional hardware resources. PA-MMPASS maintains the highest rate across all $(M,N)$ combinations, followed by PI-MMPASS, PA-SMPASS, and PI-SMPASS.

\begin{figure}
    \centering
    \includegraphics[width=0.8\linewidth]{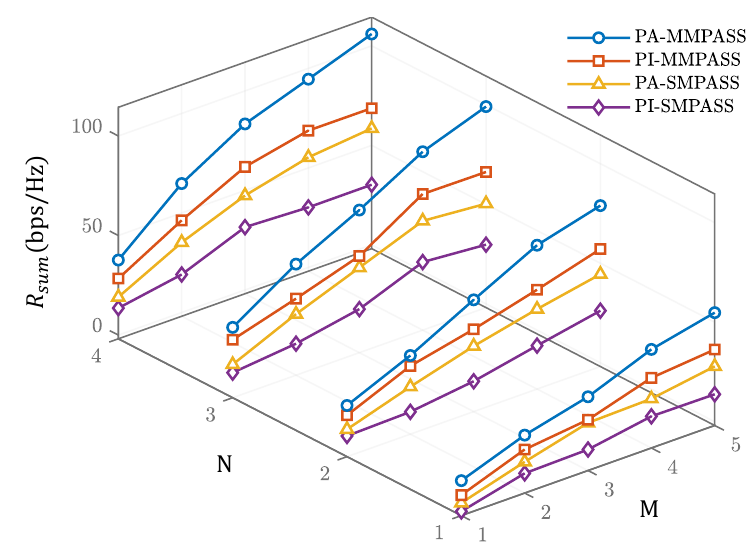}
   \caption{Achievable sum-rate $R_{sum}$ versus the numbers of waveguides $M$ and PAs $N$ for different PASS models.}
    \label{WG_Sumrate}
\end{figure}

Fig.~\ref{fig:usr_sumrate} examines the sum rate versus the number of users $K$ (with $M=4$, $N=3$). All schemes exhibit a monotonic increase, benefiting from multi-user diversity. PA-MMPASS again achieves the highest rate, while DP-MMPASS closely tracks it and clearly surpasses PI-MMPASS as $K$ grows, demonstrating that discrete polarization control becomes increasingly valuable in dense user scenarios. The multi-mode framework maintains roughly double the spectral efficiency of single-mode PASS over the entire range of $K$.

\begin{figure}
    \centering
    \includegraphics[width=0.8\linewidth]{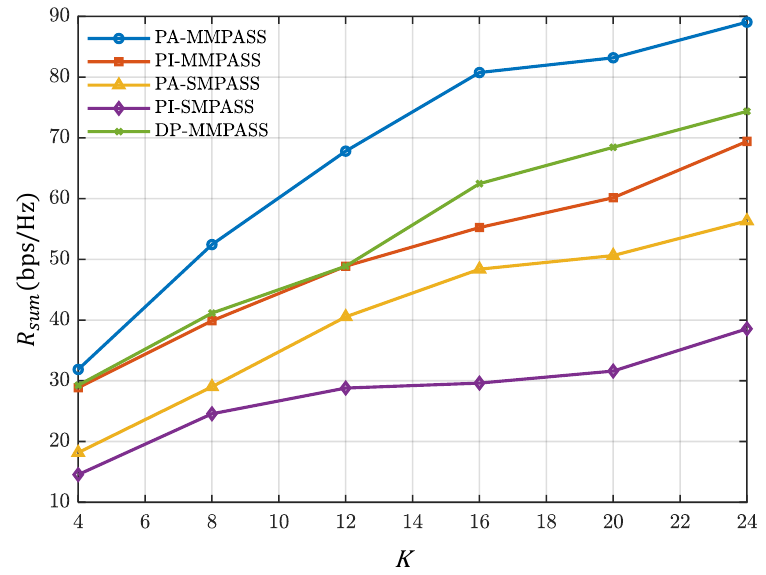}
    \caption{Achievable sum-rate $R_{sum}$ versus the number of users $K$ for different PASS models.}
    \label{fig:usr_sumrate}
\end{figure}

\subsubsection{Algorithm convergence comparison}
Finally, Fig.~\ref{fig:algocomp} compares the convergence speed of the proposed algorithm, which combines FP-based precoder optimization with a closed-form update of the polarization matching vector, with two benchmarks: a Riemannian manifold method that optimizes the receive polarization vectors on the unit sphere (using conjugate gradient on the manifold) \cite{yu2016alternating} and a signal-to-leakage-plus-noise ratio (SLNR) precoder \cite{sadek2007leakage} that treats interference as noise. 
The proposed method converges within $15$ iterations to a sum rate of about $92$ bps/Hz, significantly faster than the manifold-based approach (which requires $40+$ iterations) and achieves a $12$ bps/Hz higher final rate than SLNR. This confirms that the closed-form polarization update, combined with FP precoding, provides both fast convergence and near-optimal performance.

\begin{figure}
    \centering
    \includegraphics[width=0.8\linewidth]{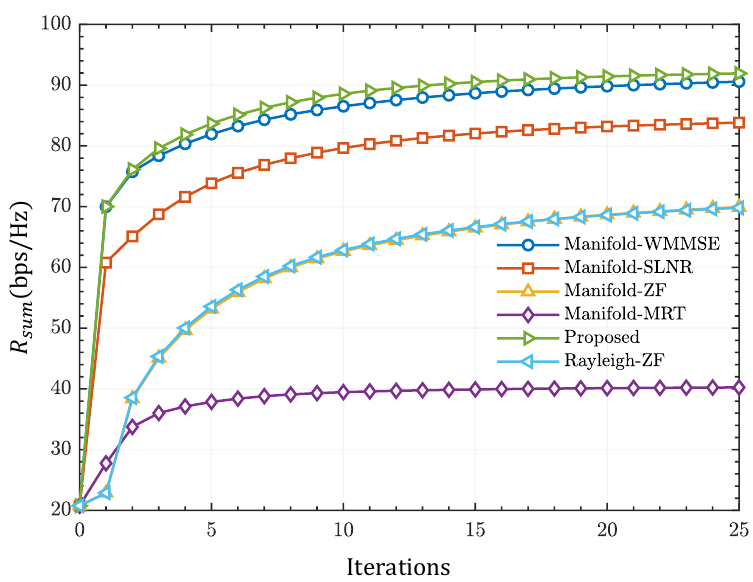}
    \caption{Achievable sum-rate versus the number of iterations for the proposed algorithm against a  manifold-based approach \cite{yu2016alternating} and an SLNR-maximization approach \cite{sadek2007leakage}.}
    \label{fig:algocomp}
\end{figure}

\section{Conclusion}
This paper has introduced a physically complete, polarization-aware electromagnetic modeling framework for MMPASS, replacing the black-box CMT modeling with a first-principles characterization that captures the full spatial and polarization degrees of freedom inherent in guided-wave radiation. By analytically deriving optimal design rules for PA placement, orientation, and polarization matching, and by developing a scalable optimization architecture for multi-user scenarios, this work establishes a theoretical foundation that elevates MMPASS from a conceptually promising architecture to a quantitatively tractable engineering paradigm.

The impact of this framework extends beyond the specific results presented. From a theoretical perspective, the full-wave model opens the door to studying higher-order mode interactions, mutual coupling effects, and more complex waveguide geometries, all of which were previously inaccessible due to the limitations of coupled-mode approximations. 
The closed-form insights into the trade-off between waveguide attenuation and free-space path loss can inform the development of automated deployment strategies for dense antenna infrastructures, particularly in indoor millimeter-wave and terahertz environments. 
On the practical side, the finding that discrete polarization control achieves near-optimal performance suggests a low-complexity implementation path, bringing MMPASS closer to real-world prototyping and standardization. More broadly, the integration of polarization awareness into the system design framework provides a blueprint for future work that combines MMPASS with complementary technologies, such as reconfigurable antennas and AI-driven beam management, where the rich modal and polarization resources can be jointly exploited to meet the demanding requirements of next-generation wireless networks.

\appendices

\section{Proof of Theorem~\ref{thm:E_radiation}}\label{sec:AppA}

This appendix presents the detailed derivation of the PA's radiation pattern in \eqref{eq:E_field_rigorous}. Let $S$ denote the aperture interface between the waveguide region and free space, and let $\hat{\bm n}$ be the unit normal pointing from the aperture of PA side to the free-space side. We denote by $(\cdot)^{(\text{PA})}$ the original fields on the PA aperture, and by $(\cdot)^{(\text{FS})}$ the limiting values of the fields on the free-space side of $S$.

Starting from Maxwell's equations:
\begin{align*}
\nabla \times \bm E &= -j\omega \bm B - \bm M_s, \\
\nabla \times \bm {\mathcal H} &= \bm J_s + j\omega \bm D, \\
\nabla \cdot \bm D &= \rho_e, \\
\nabla \cdot \bm B &= \rho_m,
\end{align*}
where $\rho_e$ and $\rho_m$ represent the electric and equivalent magnetic charge densities, respectively; $\bm D=\varepsilon \bm E$ and $\bm B=\mu \bm {\mathcal{H}}$ are the electric and magnetic flux densities. $\bm J_s$ and $\bm M_s$ denote the equivalent surface electric and magnetic current densities supported on $S$.

By applying Ampère's Circuital Law and Faraday's Law to an infinitesimal contour $\partial\Sigma$ spanning a surface $\Sigma$ across the aperture $S$, we obtain
\begin{align*}
\oint_{\partial\Sigma} \bm E \cdot d\bm l &= - \iint_{\Sigma} (j\omega \bm B + \bm M_s)\cdot \bm n_{\Sigma}\, dS, \\
\oint_{\partial\Sigma} \bm {\mathcal H} \cdot d\bm l &= \iint_{\Sigma} (\bm J_s + j\omega \bm D)\cdot \bm n_{\Sigma}\, dS. 
\end{align*}
Let the contour have tangential edge vector $\bm L=L\bm t$ and height $h$, with surface normal $\bm n_{\Sigma}=\hat{\bm n}\times \bm t$. Taking the limit $h\to 0$, the flux terms vanish, and the contour integrals reduce to
\begin{align*}
(\bm {\mathcal H}^{(\text{FS})}-\bm {\mathcal H}^{(\text{PA})})\cdot \bm L &= \bm J_s \cdot (\hat{\bm n} \times\bm L ), \\
(\bm E^{(\text{FS})}-\bm E^{(\text{PA})})\cdot \bm L &= - \bm M_s \cdot (\hat{\bm n} \times \bm L).
\end{align*}
Since $\bm L$ is arbitrary on the aperture plane, the tangential boundary conditions follow as
\begin{align*}
\hat{\bm n}\times(\bm {\mathcal H}^{(\text{FS})}-\bm {\mathcal H}^{(\text{PA})}) &= \bm J_s, \\
\hat{\bm n}\times(\bm E^{(\text{FS})}-\bm E^{(\text{PA})}) &= -\bm M_s. 
\end{align*}
Invoking Love's equivalence principle, we suppress the fields on one side of the aperture and replace the original aperture by equivalent surface currents that reproduce the radiated fields in free space. The equivalent currents are given by
\begin{align}
\bm J_s &= \hat{\bm n}\times \bm {\mathcal H}^{(\text{PA})}, \label{eq:J_s_appB}\\
\bm M_s &= -\hat{\bm n}\times \bm E^{(\text{PA})}. \label{eq:M_s_appB}
\end{align}
Next, we construct the radiated fields in the exterior homogeneous region $(\varepsilon,\mu)$ from the equivalent surface currents $(\bm{J}_s,\bm{M}_s)$. We introduce the magnetic vector potential $\bm A(\bm \psi)$ and electric vector potential $\bm F(\bm \psi)$, together with the electric scalar potential $\Phi_e$ and magnetic scalar potential $\Phi_m$, such that
\begin{align}
    \bm E &= -j\omega \bm A -\nabla\Phi_e -\frac{1}{\varepsilon}\nabla\times\bm F, \label{eq:E_pot_def}\\
    \bm {\mathcal{H}} &= \frac{1}{\mu}\nabla\times\bm A -j\omega\bm F -\nabla\Phi_m.
\end{align}
Imposing the Lorenz gauge conditions $\nabla\cdot\bm A = -j\omega\mu\varepsilon\Phi_e$ and $\nabla\cdot\bm F = -j\omega\mu\varepsilon\Phi_m$, the potentials satisfy the vector Helmholtz equations:
\begin{equation} \label{eq:helmholtz_pair}
    (\nabla^2+\varrho^2)\bm A = -\mu\bm{J}_s, \quad 
    (\nabla^2+\varrho^2)\bm F = -\varepsilon\bm{M}_s,
\end{equation}
where $\varrho = \omega\sqrt{\mu\varepsilon}$. The solution is obtained using the outgoing-space Green's function $G(\bm{\psi},\bm p) =\frac{e^{-j\varrho\|\bm{\psi}-\bm p\|}}{(4\pi \|\bm{\psi}-\bm p\|)} $ :
\begin{equation} \label{eq:exact_potentials_compact}
    \begin{bmatrix}
        \bm A(\bm{\psi})\\
        \bm F(\bm{\psi})
    \end{bmatrix}
    = \iint_{S} 
    \begin{bmatrix}
        \mu\,\bm{J}_s(\bm p)\\
        \varepsilon\,\bm{M}_s(\bm p)
    \end{bmatrix} 
    G(\bm{\psi},\bm p) \,dS.
\end{equation}
By eliminating the scalar potentials in \eqref{eq:E_pot_def}, the total electric field can be expressed solely in terms of the vector potentials:
\begin{flalign} \label{eq:E_operator_form}
& \bm E(\bm \psi) = \frac{1}{j\omega\varepsilon}\nabla\times\nabla\times\bm A(\bm{\psi}) - \frac{1}{\varepsilon}\nabla\times\bm F(\bm{\psi}) & \nonumber \\
& \overset{(a)}{=} j\varrho\frac{e^{-j\varrho\|\bm{\psi}\|}}{4\pi \|\bm{\psi}\|}\iint_{S} \frac{\bm{\psi}}{\|\bm{\psi}\|}\times \left[ \bm{M}_s - \eta_0 \big( \bm{J}_s\times\bm{\psi} \big) \right] e^{j\varrho\frac{\bm{\psi}}{\|\bm{\psi}\|} \cdot \bm{p}} \,dS & \nonumber \\
& \overset{(b)}{=} -j\varrho\frac{e^{-j\varrho\|\bm{\psi}\|}}{4\pi \|\bm{\psi}\|} \Bigg\{ \iint_{S} \frac{\bm{\psi}}{\|\bm{\psi}\|}\times (\bm{n}\times \bm{E}^{(\text{PA})}) \, e^{j\varrho\frac{\bm{\psi}}{\|\bm{\psi}\|}\cdot \bm{p}} \,dS & \nonumber \\
& + \eta_0 \iint_{S} \frac{\bm{\psi}}{\|\bm{\psi}\|} \times \big[ (\bm{n}\times \bm{\mathcal H}^{(\text{PA})})\times \frac{\bm{\psi}}{\|\bm{\psi}\|} \big] \, e^{j\varrho\frac{\bm{\psi}}{\|\bm{\psi}\|}\cdot \bm{p}} \,dS \Bigg\} & \nonumber \\
& \overset{(c)}{=} \frac{j \varrho a b \omega \mu s_{m,q}  }{2\varrho_{q}^2 \pi \sqrt{N}  } \frac{e^{-j\varrho\|\bm{\psi}-\bm{\psi}^{(\text{PA})}\|}}{\|\bm{\psi}-\bm{\psi}^{(\text{PA})}\|}\, e^{-\frac{1}{2}\alpha_\text{WG}x^{\text{(PA)}}} & \nonumber \\
& \cdot \frac{\sin\left(\frac{\pi b}{\lambda}\sin\theta\cos\phi\right)}{\frac{\pi b}{\lambda}\sin\theta\cos\phi} \, \frac{\cos\left(\frac{\pi a}{\lambda}\sin\theta\sin\phi\right)}{1 - \left(\frac{2a}{\lambda}\sin\theta\sin\phi\right)^2} & \nonumber \\
&  \cdot \bigg[ \left(1 + \frac{\beta_1}{\varrho}\cos\theta\right)\cos\phi\,\bar{\bm{\vartheta}} + \left(\frac{\beta_1}{\varrho} + \cos\theta\right)\sin\phi\,\bar{\bm{\varphi}} \bigg] & \nonumber \\
& \overset{(d)}{=}\frac{j \varrho a b \omega \mu s_{m,q}  }{2\varrho_{q}^2 \pi \sqrt{N}  }  \frac{e^{-j\varrho\|\bm{\psi}-\bm{\psi}^{(\text{PA})}\|}}{ \|\bm{\psi}-\bm{\psi}^{(\text{PA})}\|}\, e^{-\frac{1}{2}\alpha_\text{WG}x^{\text{(PA)}}} & \nonumber \\
& \cdot \frac{\sin\left(\frac{\pi a}{\lambda}\sin\theta\sin\phi\right)}{\frac{\pi a}{\lambda}\sin\theta\sin\phi} \, \frac{\cos\left(\frac{\pi b}{\lambda}\sin\theta\cos\phi\right)}{1 - \left(\frac{2b}{\lambda}\sin\theta\cos\phi\right)^2} & \nonumber \\
& \cdot \bigg[ \left(1 + \frac{\beta_2}{\varrho}\cos\theta\right)\sin\phi\,\bar{\bm{\vartheta}} + \left(\frac{\beta_2}{\varrho} + \cos\theta\right)\cos\phi\,\bar{\bm{\varphi}} \bigg] &
\end{flalign}
(a) follows from the Fraunhofer condition, under which the standard far-field approximation is applied. (b) follows from the constitutive relations between the equivalent surface currents and the tangential aperture fields in \eqref{eq:J_s} and \eqref{eq:M_s}. (c) follows by substituting \eqref{eq:E_PA^t} and evaluating the resulting two Fourier-type surface integrals in closed form. (d) corresponds to the radiated electric field associated with the mode $q=2$.

\section{Proof of Lemma~\ref{thm:single_angle}}\label{sec:AppB}

Setting $\frac{\partial}{\partial {\delta}^{(\text {PA})}} \ln |H|^2 = 0$ yields
\begin{equation*}
\begin{split}
\frac{2\bar{x}^{\text{(UR)}}}{\| \bar{\bm{\psi}}^{\text{(UR)}}\| - \bar{z}^{\text{(UR)}}} 
&+ \frac{2\bar{z}^{\text{(UR)}}}{\lambda \| \bar{\bm{\psi}}^{\text{(UR)}}\|} \\
&\cdot \left[ \pi b \cot\left( \frac{\pi b \bar{x}^{\text{(UR)}}}{\lambda \| \bar{\bm{\psi}}^{\text{(UR)}}\|} \right) - \frac{\lambda \| \bar{\bm{\psi}}^{\text{(UR)}}\|}{\bar{x}^{\text{(UR)}}} \right] = 0.
\end{split}
\end{equation*}
Since $\bar{x}^{(\text{UR})}$, $\bar{y}^{(\text{UR})}$, and $\bar{z}^{(\text{UR})}$ are trigonometric functions of the rotation angles $(\delta^{(\text {PA})}, \xi^{(\text{PA})})$, the derived equation is transcendental and admits infinite solutions. However, considering that the radiated energy is predominantly concentrated in the main lobe, we restrict our analysis to the vicinity of the main lobe center at $\bar{x}^{(\text{UR})} = 0$. In this region, we apply Taylor expansion to the cotangent term.
\begin{equation*}
\begin{split}
\frac{2\bar{x}^{\text{(UR)}}}{\| \bar{\bm{\psi}}^{\text{(UR)}}\| - \bar{z}^{\text{(UR)}}} 
&- \frac{2\bar{z}^{\text{(UR)}}}{\lambda \| \bar{\bm{\psi}}^{\text{(UR)}}\|} \cdot \frac{(\pi b)^2 \bar{x}^{\text{(UR)}}}{3\lambda \| \bar{\bm{\psi}}^{\text{(UR)}}\|} = 0.
\end{split}
\end{equation*}
The above derivation employs the Taylor expansion of the cotangent function, given by $\cot x \approx \frac{1}{x} - \frac{x}{3} + O(x^3)$, and the expression can be simplified to the following form
\begin{equation*} 
\bar{x}^{\text{(UR)}} \left[- \frac{2 \pi^2 b^2 \bar{z}^{\text{(UR)}}}{3 \lambda^2 \| \bar{\bm{\psi}}^{\text{(UR)}}\|^2}+\frac{1}{\| \bar{\bm{\psi}}^{\text{(UR)}}\|- \bar{z}^{\text{(UR)}}}\right] = 0.
\end{equation*}
Since the user is located below the PA, we have $\bar{z}^{\text{(UR)}} < 0$, which ensures that the first product term in the preceding expression remains strictly positive. Consequently, the function $\ln|H|^2$ attains its extremum at $(\bar{x}^{\text{(UR)}})^* = 0$.

Similarly, for the second equation $\frac{\partial}{\partial {\xi^{(\text{PA})}}} \ln |H|^2 = 0$,   we get $\left( {\bar{y}^{(\text{UR})}} \right)^* = 0$. Substituting the specific expression of $\left( {\bar{y}^{(\text{UR})}} \right)^*$ in equation \eqref{coord_transform} into $\left( {\bar{y}^{(\text{UR})}} \right)^* = 0$, we have
\begin{equation*}
 \left( y^{(\text{UR})} - y^{(\text{PA})} \right)\cos \xi^{(\text{PA})}  + \left( z^{(\text{UR})} - z^{(\text{PA})} \right)\sin \xi^{(\text{PA})}  = 0.
\end{equation*}
Thus, we can solve the optimal $\xi^{(PA)^*}$.

Combining the explicit expression of $\left( {\bar{x}^{(\text{UR})}} \right)^*$ in equation \eqref{coord_transform}, the condition $\left( {\bar{x}^{(\text{UR})}} \right)^* = 0$, and \eqref{e:xi}, the optimal value of $\delta^{(\text{PA})}$ can be derived as 
\begin{equation*}
{\delta^{(\text{PA})}}^* = \arctan\left( \frac{ x^{(\text{UR})} - x^{(\text{PA})} }{ \sqrt{ \left( y^{(\text{UR})} - y^{(\text{PA})} \right)^2 + \left( z^{(\text{UR})} - z^{(\text{PA})} \right)^2 } } \right).
\end{equation*}

Given $\left( {\bar{x}^{(\text{UR})}} \right)^* = \left( {\bar{y}^{(\text{UR})}} \right)^* = 0$, the optimal $\left( {\bar{z}^{(\text{UR})}} \right)^*$ is given by
\begin{eqnarray}
(\bar{z}^{(\text{UR})})^* = -\|\bm{\psi}^{(\text{UR})} - \bm{\psi}^{(\text{PA})}\|,
\end{eqnarray}

To verify that this stationary point is a maximum, we evaluate the Hessian matrix $\bm{H}$ of $\ln |H|^2$ at the optimal solution $({\delta^{(\text{PA})}}^*, {\xi^{(\text{PA})}}^*)$. The evaluation yields a diagonal matrix:
\begin{eqnarray}\label{eq:hessian_matrix}
    &&\hspace{-0.8cm}\bm{H}^{\star} \triangleq 
\begin{bmatrix}
\frac{\partial^2 \ln |H|^2}{\partial (\delta^{(\text {PA})})^2} & \frac{\partial^2 \ln |H|^2}{\partial \delta^{(\text {PA})} \partial \xi^{(\text {PA})}} \\
\frac{\partial^2 \ln |H|^2}{\partial \xi^{(\text {PA})} \partial \delta^{(\text {PA})}} & \frac{\partial^2 \ln |H|^2}{\partial (\xi^{(\text {PA})})^2}
\end{bmatrix}_{({\delta^{(\text {PA})}}^*, {\xi^{(\text {PA})}}^*)} \\
    &&\hspace{-0.8cm} =
\begin{bmatrix}
-\frac{2(\pi b)^2}{3\lambda^2} - 1 & 0 \\
0 & \frac{\left( y^{(\text{UR})} \right)^2 + \left( z^{(\text{UR})} \right)^2}{\| \bar{\bm{\psi}}^{\text{(UR)}}\|^2} \left[ \frac{2 a^2 (8 - \pi^2)}{\lambda^2} - 1 \right] \notag
\end{bmatrix}.
\end{eqnarray}
Given that the antenna dimensions ensure the first diagonal term is negative, and noting that $(8 - \pi^2) < 0$ and $\frac{\left( y^{(\text{UR})} \right)^2 + \left( z^{(\text{UR})} \right)^2}{\| \bar{\bm{\psi}}^{\text{(UR)}}\|^2}  \ge 0$, we conclude that the second diagonal term is inherently negative. Consequently, the Hessian matrix $\bm{H}^{\star}$ is negative definite, confirming that the solution corresponds to a strict local maximum.

Moreover, the terms in $|H|^2$ that depend on $(\theta,\phi)$ can be written as
\begin{eqnarray*}
    &&\hspace{-0.8cm}|H|^2 \propto (1+\cos\theta)^2 \cdot
\mathrm{sinc}^2\!\left(\sin\theta \cdot\cos\phi\cdot\frac{b}{\lambda}\right)\\
 &&\hspace{-0.5cm}\cdot \frac{\cos^2 \left(\sin\theta \cdot\sin\phi\cdot\frac{a}{\lambda}\pi\right)}{\big[1-\left(2\sin\theta \cdot\sin\phi\cdot\frac{a}{\lambda}\pi\right)^2\big]^2} .
\end{eqnarray*}
The obliquity factor satisfies $(1+\cos\theta)^2\le 4$ with equality at $\theta=0$.
The squared $\mathrm{sinc}$ term is upper bounded by $1$ with equality when
$\sin\theta\cos\phi=0$.
In the main-lobe neighborhood, the remaining cosine-distribution term is maximized at
$\sin\theta\sin\phi=0$.
Consequently, all angle-dependent factors attain their theoretical upper bounds strictly under the condition $(\bar{x}^{(\text{UR})})^* = (\bar{y}^{(\text{UR})})^* = 0$ with $(\bar{z}^{(\text{UR})})^* = -\|\bm{\psi}^{(\text{UR})} - \bm{\psi}^{(\text{PA})}\|$. Thus, $|H|^2$ achieves its global maximum at this specific configuration.

\section{Proof of Lemma~\ref{thm:optimal_x}}\label{sec:AppC}
To find the optimal PA position $x^{(\text{PA})}$,  we define a new function to analyze the trade-off between waveguide loss and atmospheric absorption loss in \eqref{e:partial_ln_H_2}, we have
\begin{equation}\label{eq:app_foc}
F(d)\triangleq -\alpha_{\text{W}}
+\alpha_{\text{A}}\frac{d}{\sqrt{d^2+\rho^2}}
+\frac{2d}{d^2+\rho^2}
=0,
\end{equation}
where $d \triangleq x^{(\text{UR})} - x^{(\text{PA})}$ represents the horizontal distance to the user, and $\rho \triangleq \sqrt{ (y^{(\text{UR})} - y^{(\text{PA})})^2 + (z^{(\text{UR})} - z^{(\text{PA})})^2 }$ is the constant radial distance in the transverse plane. 

The derivative  of $F(d)$ with respect to $d$ is
\begin{equation*}\label{eq:app_Fprime}
\frac{\partial F(d)}{\partial d}
=
\alpha_{\text{A}}\frac{\rho^2}{(d^2+\rho^2)^{3/2}}
+\frac{2(\rho^2-d^2)}{(d^2+\rho^2)^2}.
\end{equation*}
The function $F(d)$ attains an extremum where its derivative is zero. Since $\sqrt{d^2+\rho^2}$ is always positive, the above equation simplifies to
\begin{equation*}
\alpha_{\text{A}}^2\rho^4(d^2+\rho^2)+4(\rho^2-d^2)^2=0
\end{equation*}

For such a transcendental equation, Ferrari's method typically yields four roots. However, two of these are extraneous roots that fall outside the domain of definition. Upon simplification, the two valid extrema of the function $F(d)$ are obtained as follows:
\begin{equation*}
\begin{split}
    d_1 = \sqrt{\frac{8\rho^2 + \alpha_{\text{A}}^2 \rho^4 + \alpha_{\text{A}} \rho^4\sqrt{\alpha_{\text{A}}^2  + 16(\rho^2+1)}}{8}},\\
d_2 = \sqrt{\frac{8\rho^2 + \alpha_{\text{A}}^2 \rho^4 - \alpha_{\text{A}} \rho^4\sqrt{\alpha_{\text{A}}^2  + 16(\rho^2+1)}}{8}}.
\end{split}
\end{equation*}

The second derivative of the function $F(d)$ with respect to the variable $d$ can be expressed as:
\begin{equation*}
    \frac{\partial^2 F(d)}{\partial d^2}=\frac{d\left(4(d^2-3\rho^2)\sqrt{d^2+\rho^2}-3\alpha_{\text{A}}\rho^2(d^2+\rho^2)\right)}{(d^2+\rho^2)^\frac{7}{2}}
\end{equation*}
Substituting $d_1$ and $d_2$ with $\alpha_{\text{A}}=0.05$ and $\alpha_{\text{W}}=0.08$ for the case $\rho \ge 3$ yields:
\begin{equation*}
    \left. \frac{\partial^2 F}{\partial d^2} \right|_{d=d_2}=\frac{64}{\rho^3}\cdot\frac{\sqrt{Q-8}}{Q^3}\left(\sqrt{2}(Q-32)-3\alpha_{\text{A}}\rho\sqrt{Q}\right)<0
\end{equation*}
where $Q \triangleq 16+\alpha_{\text{A}}^2\rho^2-\alpha_{\text{A}}\rho^2\sqrt{\alpha_{\text{A}}^2+16(\rho^2+1)}$. It is evident that the maximum is located at $d_2$, which is closer to the user. Substituting the specific expression for $d_1$ and $d_2$, we can observe that $F(d_1)$ and $F(d_2)$ are all positive.

Given that $F(0) < 0$ corresponds to the scenario where the antenna is vertically aligned with the user, it follows that $|H|^2$ attains a maximum with respect to $x^{(PA)}$ within the interval bounded by $d_2$ and $d=0$.

We introduce an intermediate variable into equation \ref{eq:app_foc} by setting $d=\frac{\rho \cos \omega}{\sin \omega}>0$ for $\omega \in (0, \pi/2)$. Then $d^2 = \frac{\rho^2 \cos^2 \omega}{\sin^2 \omega}$, yielding
\begin{equation}
F(d)= -\alpha_{\text{W}}+\alpha_{\text{A}}\cos \omega+\frac{1}{\rho}\sin(2\omega)
\end{equation}
By setting $F(d)=0$ and applying the Taylor approximation $\omega^*= \frac{\pi}{2} - \frac{\alpha_\text{W}}{\alpha_{\text{A}}+\frac{\rho}{2}} $,  we obtain:
\begin{equation}
d^*= \frac{\rho^2\alpha_{\text{W}}}{\rho\alpha_{A}+2}
\end{equation}

To prove Lemma~\ref{thm:optimal_x}, we have to analyze the trade-off between the two user locations. For a single multimode pinching antenna serving two users, the derivative of the sum rate can be expressed as follows:
\begin{equation}\label{eq:partial_R_sum}
\frac{\partial R_s(x)}{\partial x}
=\frac{1}{\ln 2}\sum_{k=1}^{2}\frac{p_k|H_k(x)|^2}{\sigma^2_k+p_k|H_k(x)|^2}
\cdot
\frac{\partial \ln |H_k(x)|^2}{\partial x}.
\end{equation}
where $\frac{p_k|H_k(x)|^2}{\sigma^2_k+p_k|H_k(x)|^2}\in (0,1)$ captures the marginal rate gain of user $k$ at its current SNR level, and therefore naturally governs the trade-off in the shared placement variable $x$. Based on the preceding conclusion, it is straightforward to verify that $R_s' > 0$ when $x < x_1^\star$ and $R_s' < 0$ when $x > x_2^\star$. Consequently, the optimal $x$ that maximizes the sum rate resides within the interval $[x_1^\star, x_2^\star]$. This completes the proof.

\section{Proof of  Theorem~\ref{lem:optimal_position}}\label{sec:AppD}

To establish Theorem~\ref{lem:optimal_position}, we characterize the local behavior of $R_q(x^{(\mathrm{PA})})$ around the single-user optimum $x_q^*$. In the two-user setting, the optimal PA location deviates from $x_q^*$ due to the spatial compromise of sharing a common antenna position and the resulting position-dependent power coupling. Evaluating the derivatives at $x_q^*$ analytically quantifies this joint perturbation. Differentiating $R_q(x^{(\mathrm{PA})})$ yields
\begin{eqnarray}
&&\hspace{-1cm} R_q^{\prime}\bigg|_{x^{(\text{PA})}=x_q^*}
\!\!\!\!\!\!\!\!=\!\frac{\!|w^{(p)}_q(x^{(\text{PA})})|^2\!\frac{\partial |h_q(x^{(\mathrm{PA})})|^2}{\partial x^{(\text{PA})}}\!+\!|h_q(x^{(\text{PA})})|^2\!\frac{\partial |w^{(p)}_q(x^{(\text{PA})})|^2}{\partial x^{(\text{PA})}}\!}{\ln2(\frac{\sigma_q^2}{P}+|w^{(p)}_q(x^{(\text{PA})})|^2|h_q(x^{(\text{PA})})|^2)}\notag \\
&&\hspace{-0cm} \overset{(a)}{=}\frac{P|h_q(x_q^*)|^2\frac{\partial |w^{(p)}_q(x^{(\text{PA})})|^2}{\partial x^{(\text{PA})}} }{\ln2(\sigma_q^2+P|w^{(p)}_q(x_q^*)|^2|h_q(x_q^*)|^2)}\notag \\
&&\hspace{-0.3cm}=-\frac{\sigma_2^2}{2\ln 2}\,\frac{\dfrac{1}{|h_{q^{\prime}}(x_q^*)|^2}\left.\dfrac{\partial \ln |h_{q^{\prime}}(x^{(\text{PA})})|^2}{\partial x^{(\text{PA})}}\right|_{x^{(\text{PA})}=x_q^*}}
{\dfrac{\sigma_1^2}{|h_q(x_q^*)|^2}+P|w^{(p)}_q(x_q^*)|^2},
\end{eqnarray}
where step (a) follows by substituting $x^{(\text{PA})}=x_q^*$. Noting that $x_q^*$ is the unconstrained local maximizer for user $q$, and hence $\frac{\partial |h_q(x^{(\text{PA})})|^2}{\partial x^{(\text{PA})}}\big|_{x^{(\text{PA})}=x_q^*} = 0$. Differentiating once more yields the exact second-order expression
\begin{eqnarray}
&&\hspace{-0.8cm} R_q^{\prime\prime}(x_q^*)=
\frac{P}{\ln 2 (\sigma_q^2 + P|h_q(x_q^*)|^2|w^{(p)}_q(x_q^*)|^2)} \Bigg[ \bigg(|w^{(p)}_q(x_q^*)|^2\notag \\
&&\hspace{-1cm} +\frac{\sigma_q^2}{2P|h_q(x_q^*)|^2}\bigg) \frac{\partial^2|h_q(x_q^*)|^2}{\partial (x_q^*)^2}-\frac{\sigma_{q^\prime}^2 |h_q(x_q^*)|^2}{2P|h_{q^\prime}(x_q^*)|^4}\frac{\partial^2|h_{q^\prime}(x_q^*)|^2}{\partial (x_q^*)^2} \notag\\
&&\hspace{-1cm} +\bigg(\frac{\partial|h_{q^\prime}(x_q^*)|^2}{\partial x_q^*} \bigg)^2\bigg(\frac{\sigma_{q^\prime}^2|h_q(x_q^*)|^2}{P|h_{q^\prime}(x_q^*)|^6}-\frac{\sigma_{q^\prime}^4|h_q(x_q^*)|^4}{P|h_{q^\prime}(x_q^*)|^8} \notag \\
&&\hspace{-0.5cm}\cdot\frac{1}{4(\sigma_q^2 + P|h_q(x_q^*)|^2|w^{(p)}_q(x_q^*)|^2)}\bigg) \Bigg],
\end{eqnarray}
Although exact, the above expression is too cumbersome to provide direct insight. The next three steps extract the dominant curvature terms under the weak-coupling regime of interest. In particular, we aim to identify which part of the curvature is caused by the residual direct variation of user $q$ itself, and which part is caused by the interaction with the alternate mode $q'$.
\begin{eqnarray}
&&\hspace{-0.8cm} R_q^{\prime\prime}(x_q^*)\overset{(a)}{\approx}\frac{P}{\ln 2 (\sigma_q^2 + P|h_q(x_q^*)|^2|w^{(p)}_q(x_q^*)|^2)}\Bigg[\frac{\partial^2|h_{q^\prime}(x_q^*)|^2}{\partial (x_q^*)^2}\notag \\
&&\hspace{-1cm}\cdot\bigg(|w^{(p)}_q(x_q^*)|^2\!\!-\!\!\frac{\sigma_{q^\prime}^2 |h_q(x_q^*)|^2}{2P|h_{q^\prime}(x_q^*)|^4}\bigg)\!\!+\!\!\bigg(\!\frac{\partial|h_{q^\prime}(x_q^*)|^2}{\partial x_q^*} \!\bigg)^2\!\!\bigg(\frac{\sigma_{q^\prime}^2|h_q(x_q^*)|^2}{P|h_{q^\prime}(x_q^*)|^6}
\notag \\
&&\hspace{-0.8cm}-\frac{\sigma_{q^\prime}^4|h_q(x_q^*)|^4}{P|h_{q^\prime}(x_q^*)|^8}\cdot\frac{1}{4(\sigma_q^2 + P|h_q(x_q^*)|^2|w^{(p)}_q(x_q^*)|^2)}\bigg) \Bigg] \notag \\
&&\hspace{-0.8cm} \overset{(b)}{\approx}R_q^\prime(x_q^*)\bigg(\frac{\frac{\partial^2|h_{q^\prime}(x_q^*)|^2}{\partial (x_q^*)^2} }{\frac{\partial|h_{q^\prime}(x_q^*)|^2}{\partial x_q^*} } -\frac{\frac{\partial|h_{q^\prime}(x_q^*)|^2}{\partial x_q^*} }{|h_{q^\prime}(x_q^*)|^2}\bigg)-\ln2\bigg(R_q^\prime(x_q^*)\bigg)^2\notag \\
&&\hspace{-0.8cm}\overset{(c)}{\approx}- \ln 2 \! \left(\! R_q^{\prime}(x_q^*) \! \right)^{\!2} \!-\! R_q^{\prime}(x_q^*) \left.  \!\!\frac{\partial \! \ln \! |h_{q^{\prime}}(x^{(\text{PA})}\!) \! |^{2}}{\partial x^{(\text{PA})}} \!\right|_{ x^{(\text{PA})}=x_q^*},
\end{eqnarray} 
where step (a) follows from two approximations. First, the second-order derivatives of $|h_q(x_q^*)|^2$ and $|h_{q^\prime}(x_q^*)|^2$ with respect to $x_q^*$ are treated as being of the same order, so their difference is neglected at this stage. Second, the coefficient associated with the $q'$-related term is much larger than that associated with the $q$-related term, 
$\frac{\sigma_{q'}^2 |h_q(x_q^*)|^2}{2P|h_{q'}(x_q^*)|^4}
\gg \frac{\sigma_q^2}{2P|h_q(x_q^*)|^2}.$
Therefore, the contribution of the latter is ignored. 
Step (b) follows from the fact that, in the considered regime, $\frac{\sigma_{q'}^2 |h_q(x_q^*)|^2}{2P|h_{q'}(x_q^*)|^4} \gg |w^{(p)}_q(x_q^*)|^2.$
Step (c) follows because, at $x^{(\mathrm{PA})}=x_q^*$, the ratio between the first derivative of $|h_{q'}(x^{(\text{PA})})|^2$ and its value is larger than the ratio between its second derivative and its first derivative. 



\bibliographystyle{IEEEtran}
\bibliography{Ref}

\end{document}